\newtheorem{theorem}{Theorem}
\newtheorem{lemma}{Lemma}
\newcommand{\oracle}{{\mathcal X}}
\newcommand{\suffix}{\mathrm{Suffix}}
\newcommand{\prefix}{\mathrm{Prefix}}
\newcommand{\sibling}{\mathrm{Sibling}}
\newcommand{\siblingL}{\mathrm{SiblingL}}
\newcommand{\parent}{\mathrm{Parent}}
\newcommand{\parentL}{\mathrm{ParentL}}
\newcommand{\labell}{\mathrm{Label}}
\newcommand{\sub}{\mathrm{Sub}}
\newcommand{\query}{\mathrm{Query}}
\newcommand{\yes}{{\it Yes}}
\newcommand{\no}{{\it No}}
\newcommand{\nul}{{\it null}}
\newcommand{\freq}{\mathrm{freq}}
\newcommand{\pos}{\mathrm{pos}}
\def\squarebox#1{\hbox to #1{\hfill\vbox to #1{\vfill}}}
 \newcommand{\bs}{\bigskip} 
 \newcommand{\hs}[1]{\hspace*{ #1 mm}} 
\newcommand{\ignore}[1]{}
\begin{document}
\pagestyle{plain}
\begin{center}
{\Large {\bf Reconstructing Strings from Substrings:\\ Optimal Randomized and Average-Case Algorithms}}
\bs\\

{\sc Kazuo Iwama}$^1$ \hspace{5mm} 
{\sc Junichi Teruyama}$^2$ \hspace{5mm} 
{\sc Shuntaro Tsuyama}$^3$ \hspace{5mm} 

\

{\small
$^1${RIMS, Kyoto University, Japan}; \\
{\tt iwama@kuis.kyoto-u.ac.jp} 

$^2${School of Social Information Science, University of Hyogo, Japan}; \\
{\tt junichi\_teruyama@hq.u-hyogo.ac.jp}

$^3${School of Informatics, Kyoto University, Japan}; \\
{\tt stsuyama@kuis.kyoto-u.ac.jp} 

}
\end{center}
\bs

\begin{abstract}
The problem called {\em String reconstruction from substrings} is a
mathematical model of sequencing by hybridization that plays an
important role in DNA sequencing. In this problem, we are given a
blackbox oracle holding an unknown string $\oracle$ and are required to obtain
(reconstruct) $\oracle$ through {\em substring queries} $Q(S)$.  $Q(S)$ is given
to the oracle with a string $S$ and the answer of the oracle is Yes if $\oracle$
includes $S$ as a substring and No otherwise.  Our goal is to minimize
the number of queries for the reconstruction.  In this paper, we deal
with only binary strings for $\oracle$ whose length $n$ is given in advance
by using a sequence of good $S$'s.
In 1995, Skiena and Sundaram first studied this problem and obtained
an algorithm whose query complexity is $n+O(\log n)$.  Its information
theoretic lower bound is $n$, and they posed an obvious open question;
if we can remove the $O(\log n)$ additive term.  No progress has been
made until now.  This paper gives two partially positive answers
to this open question.  One is a randomized algorithm whose query
complexity is $n+O(1)$ with high probability and the other is an
average-case algorithm also having a query complexity of $n+O(1)$ on
average.  The $n$ lower bound is still true for both cases, and
hence they are optimal up to an additive constant.
\end{abstract}

\section{Introduction}

Sequencing by hybridization (SBH)~\cite{DC87, LFKKSM88, PL94} is one of
the major approaches to DNA sequencing which was developed in 1980's
and 1990's.  Its basic idea is to construct, for a given set ${\cal L}$ 
of (short) strings, a single (long) string $T$ that contains all
the strings in ${\cal L}$ as substrings.  Of course a concatenation of
all strings in ${\cal L}$ is a trivial answer. So we usually impose
several constraints for $T$, for instance, $T$ should be shortest or
$T$ should not include another given set of strings.  The problem is
certainly interesting from an algorithmic point of view, but unfortunately,
many nontrivial versions of the problem were proven as intractable
(e.g.,~\cite{GMS80, BPSS02}).

In 1995, Skiena and Sundaram proposed a new approach for SBH which is
more interactive~\cite{SS95}.  Namely we can ask $T$ whether some
string $S$ is its substring or not, {\em sequentially}, i.e., by dynamically selecting
$S$ in each round.  Of course our goal is to select ``good'' $S$'s by
using previously obtained information (yes/no answers of $T$ to the
previous queries).  
More formally they introduced the problem called {\em String 
reconstruction from substrings}.  We are given a black-box oracle
having a hidden string $\oracle$.  If we ask the oracle with a query string 
$S$, then the oracle gives back an answer $\yes$ if $\oracle$
contains $S$ as a substring and $\no$ otherwise.  Our goal is to
reconstruct the string $\oracle$ using a minimum number of
queries.  
This is an idealized model of SBH, having an excellent simplicity as a
mathematical model.

For the case that the alphabet is binary and the length $n$
is known in advance, the authors gave the following 
elegant algorithm, called {\sc SkSu} in this paper. 
(\cite{SS95} also
discusses different cases involving a larger alphabet and/or unknown $n$.
In this paper, however, we are interested in only this binary,
known-$n$ case.)
{\sc SkSu} first obtains the longest 0's, $0^d$, in $\oracle$ using a simple
binary search by spending at most $\log n$ queries (our $\log$ in this
paper is all base-2).  Assume for simplicity
that this $0^d$ appears in $\oracle$ only once.  
Then we ``extend'' this $0^d$ to the right by asking 
if $0^d1$ is a substring (this query is denoted by $\query(0^d1)$).
Note that the answer should be $\yes$ since
$0^d$ is the longest 0's.  Then make $\query(0^d11)$.  If the answer
is $\yes$ then $0^d11$ is confirmed as a substring of $\oracle$.  
Otherwise, $0^d10$ should be a substring since our string is binary.  Repeat
this procedure, namely we add 1 to the current substring and ask the
oracle with that string, until the substring arrives at the right end
of $\oracle$.  We then extend it to the left until its length becomes
$n$ (see the next section for more details).

By using a nice mechanism for detecting the right end, they proved
{\sc SkSu} spends at most $n+\log n +O(1)$ queries and always produces
a correct answer.  The information theoretic lower bound for this
query complexity is $n$, 
and hence, the above upper bound is almost optimal. 
Unfortunately, however, it still has an additive logarithmic gap.  An obvious question is if we can remove
this gap, which is posed as an open question in~\cite{SS95}.

{\bf Our Contribution}.  We give two partially positive answers to this
open question.  Our first algorithm is deterministic which spends
$n+O(1)$ (in fact at most $n+6$) queries {\em on average} and our second one is 
{\em randomized}, which spends
$n+O(1)$ queries with high probability before
it outputs an always correct answer.  It is
straightforward to show that the lower bound for the query complexity
is $n$ for both average case (obvious) and randomized case
(using the Yao's principle).  So our algorithm is
optimal up to an additive constant.

Both algorithms exploit the following fundamental property of binary
strings: Let $S$ be an arbitrarily fixed string of length $\log n$.
Then a constant fraction of strings of length $n$ includes $S$ as
its substring, but this proportion decreases rapidly as the length of
$S$ increases.  For instance, if a string $T$ of length $n$
is randomly selected, it is
unlikely for $T$ to include $S$ of length $\log n +10$.  Similarly, a
constant fraction of $T$ does {\em not} include $S$ of length $\log n$
as its substring, but the proportion decreases rapidly as the length
of $S$ decreases.  Our average-case algorithm is virtually the same as
{\sc SkSu} but we simply include this property in its analysis.

The randomized algorithm is more involved.  The easy case is that the
oracle string $\oracle$ is close to a random string.  Then we can
fully use the above property.  Namely a {\em constant} number of
queries with random strings of length about $\log n$ allow us to find
a substring and a nonsubstring of length about $\log n$ whp, which can
save $\log n$ queries of {\sc SkSu}.  Therefore, we can focus
ourselves on the case that $\oracle$ is far from a random string.
It then turns out that we can define
two groups such that $\oracle$ must belong to either of them.  One
group consists of strings having a lot of repetitions of same substrings.
Intuitively, if we know that a substring $S0$ appears in $\oracle$ but 
$S1$ does not, then we can save one query
whenever we encounter substring $S$.  Thus our profit is large if $S$
repeats a lot.  The other group does not have
many repetitions of same substrings.  Then we can find a ``second seed''
other than the first seed (the longest substring of 0's), which is a bit
longer than the first seed, without any extra queries.  
We can exploit this difference of the length between the first and the
second seeds to remove the $\log n$ gap. 

{\bf Related Work.} 
There are various models for string reconstruction from the
information of substrings and nonsubstrings.
Margaritis and Skiena \cite{MS95} studied the problem called
{\em String reconstruction from substrings in rounds}. 
The query model of this problem is exactly the same as above, but
(two or more) queries can be performed in a single {\em round}. 
Queries in each round can depend on the answers to the queries 
and the answers in the
previous rounds but not on those in the same round and the goal is to minimize
the number of rounds and the number of queries in each round.
Margaritis and Skiena~\cite{MS95} gave a trade-off between 
the number of rounds and the number of queries per round.
There are results about several lower bounds of queries~\cite{FPU99,
Tsur05} in this model, for instance, the lower bound of queries when
the number of rounds is one.
Frieze and Halld\'orsson~\cite{FH02} studied a variant of the model,
in which for each query, the answer is not binary but ternary.  Namely,
it is whether the string appears once in the oracle, appears at least twice, or does not appear.
Tsur~\cite{Tsur10} provided algorithms that improve the results of~\cite{MS95} and~\cite{FH02}.
Acharya et al.~\cite{ADMOP15} investigate the problem deciding whether or not
we can reconstruct the string from substring multisets,
which is a set of frequencies of symbols for all substrings.
Cleve et al.~\cite{C.etal.12} show that if we are allowed to use quantum mechanisms, the query
complexity can be below the classical lower bound or sublinear in $n$.

{\bf Notations}.  We usually use capital 
letters $I,S,T, \ldots$, for strings and
backward small letters $s, t, \ldots$, for symbols.  In this paper,
strings are always binary strings, i.e., a {\em string} $S$ means
$S\in \{0,1\}^*$. Similarly a {\em symbol} $s$ means $s\in \{0,1\}$.
$\overline{0}=1$ and $\overline{1}=0$.
For two strings $S$ and $T$, $ST$ or
$S\!\cdot\! T$ denotes their concatenation.  This includes the case that $S$
and/or $T$ are a single symbol like $S1$ or $S\!\cdot\! 1$.  
For a string $S$, $S[i]$ denotes the $i$-th symbol of $S$.
Let $S=S[1]S[2]\cdots S[m]$.
Then, we call $m$, denoted by $|S|$, the {\it length} of $S$.
$\suffix_i(S)$ denotes string $S[m-i+1]\cdots S[m]$, i.e.,
the {\em suffix} of $S$ of length $i$.  $\prefix_i(S)$ denotes 
string $S[1]\cdots S[i]$, i.e.,
the {\em prefix} of $S$ of length $i$. For a string $S$, $\parent(S)$
denotes $\prefix_{|S|-1}(S)=S[1]S[2]\cdots S[m-1]$, and 
$\sibling(S)$
denotes $\parent(S)\overline{\suffix_1(S)} = S[1]\cdots S[m-1]\overline{S[m]}$. 

We denote our target string in the blackbox oracle by $\oracle$. 
We assume that its length is given in advance and denoted by $n$.
A query to $\oracle$ is denoted by $\query(S)$ and its answer is 1
if $\oracle$ includes the substring $S$ and 0 otherwise.  We
sometimes use $\yes$ for 1 and $\no$ for 0.  For a string $S$, 
$\sub(S)$ denotes the set of all substrings of $S$. 
We often say that a string $Z$ is a {\em substring} of
$S$ if $Z\in \sub(S)$, and a {\em nonsubstring} of $S$ otherwise.
We also say that $Z$ is an $S$-sub ($S$-nonsub, resp.) if $Z$ is a
substring (a nonsubstring, resp.) of $S$.

\section{Basic Algorithm}\label{sec:basic}

\begin{algorithm}[t]
\caption{Procedure {\sc Basic}($S$,$T$)}
\label{alg-basic}
\DontPrintSemicolon
\Indm
\KwIn{ A substring $S$  and a nonsubstring $T$ of $\oracle$}
\KwOut{The oracle string $\oracle$}
\Indp
$i \leftarrow 1$;\;
\While{$i\leq|T|$}{
    \If{$\query(S \!\cdot\! T[1]\cdots T[i-1]\overline{T[i]})=1$}{
    $S \leftarrow S \!\cdot\! T[1]\cdots T[i-1]\overline{T[i]}$; $i \leftarrow 1$;\;
  }\Else{
    $i \leftarrow i+1$;\;
  }
}
$i \leftarrow 1$;\;
\While{$\query(S \cdot T[i])=1$}{
    $S \leftarrow S \cdot T[i]$;
    $i \leftarrow i+1$;\;
}
\While{$|S|<n$}{
    $S \leftarrow$ {\sc ExtendLeft}($S$)\;
}
\Return $S$\;
\end{algorithm}
\begin{algorithm}[!t]
\caption{{\sc ExtendLeft}($S$)}
\label{alg:extendleft}
\DontPrintSemicolon
\Indm
\KwOut{A string $S$.}
\Indp
\lIf{$\query(1S)=1$}{
\Return $1S$
}\lElse{
\Return $0S$
}
\end{algorithm}

The Skiena and Sundaram's algorithm~\cite{SS95} works as follows: For
given $d>0$ such that it is already known that $0^d$ is a substring of
$\oracle$ and $0^{d+1}$ is not, the algorithm extends string $S$
(originally $S=0^d$) to the right by making $\query(S\!\cdot\!1)$.  If
the answer is $\yes$, then $S$ is replaced by $S1$ and by $S0$
otherwise. (If $\query(S\!\cdot\!1)$ is $\no$, the correct symbol
after $S$ should have been $\overline{1}=0$ since our strings are
always binary.)  This extension is obviously correct until the right
end of $\oracle$ comes.  
If the extension has gone beyond the
right end of $\oracle$, all queries after that are answered with
$\no$. In other words, if $\suffix_{d+1}(S)=0^{d+1}$, then we know
that this has happened (recall that $0^{d+1}$ is known to be a
nonsubstring) and there must be the right end somewhere in $0^{d+1}$.
Finding it is easy, i.e., if the current $S$ is $S'0^{d+1}$, then
simply make queries $\query(S'0)$, $\query(S'00)$ and so on 
until the answer becomes $\no$ (if $\query(S'0^j)$
is $\yes$ and $\query(S'0^{j+1})$ is $\no$, then $\oracle$ has a suffix of
$S'0^j$).  Once we have reconstructed the correct suffix, say $S'0^j$,
then all we have to do is to extend it to the left in a way similar to
the above until its length
becomes $n$ which we have assumed is given in advance.

Suppose that the correct suffix is $S'0^j$.  Then the
algorithm has spent $|S'|-d$ queries until the end of $S'$, 
then $d+1$ ones until we have noticed
the right end has been passed, $j+1$ ones to find the right end, and $n-|S'|-j$ ones
for the last phase of left extension, which makes 
\vspace{-9pt}
$$(|S'|-d)+(d+1)+(j+1)+(n-|S'|-j)=n+2\vspace{-9pt}
$$
queries in
total.  Note that we further need queries to obtain the value of $d$
such that $0^d$ is a substring of $\oracle$ and $0^{d+1}$ is not,
i.e., $O(\log n)$ ones in the worst case if we use a simple binary search.

We add a small generalization to this algorithm, by replacing $0^d$ and
$0^{d+1}$ with any (known) substring $S$ and any (known)
nonsubstring $T$ of $\oracle$, respectively.  The new algorithm, 
Algorithm~1, is very
similar:  Our extension to the
right begins from $S$ as before.  Suppose that our current string is
$S$ and the last query (=$\query(S)$) was answered $\yes$. 
Then our next query is
$\query(S\;\overline{T[1]})$ and if the answer is $\no$,
then the next query is $\query(S \; T[1] \;\overline{T[2]})$.  If the answer is again $\no$,
then the next query is
$\query(S\; T[1] \; T[2] \;\overline{T[3]})$, and so on.  If the answer is $\yes$ we
simply confirm
the extension so far, say as $S'$, and restart with
$\query(S'\;\overline{T[1]})$).  Again if the suffix of the current
string becomes $T$ (having $|T|$ consecutive $\no$'s), there should
have been the right end somewhere in this suffix.

We call this generalized
Skiena and Sundaram {\sc Basic}, which will be used in several
occasions in our new algorithms given in the next sections.  Its query
complexity can be obtained exactly as above, giving us our first
theorem, Theorem~\ref{theorem-basic}. 
\begin{theorem}\label{theorem-basic}
{\sc Basic} is correct and its query complexity is $n-|S|+|T|+1$.
\end{theorem}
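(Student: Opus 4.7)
The plan is to verify two things about {\sc Basic}$(S,T)$: that the returned value equals $\oracle$, and that exactly $n-|S|+|T|+1$ queries are made. Throughout, I will maintain the invariant that the current string $S$ is a substring of $\oracle$; since the only substring of $\oracle$ of length $n$ is $\oracle$ itself, terminating with $|S|=n$ forces $S=\oracle$.

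For the first (right-extension) loop, each YES answer legitimately extends $S$, and when the loop exits, the $|T|$ final NOs certify that $S\cdot T[1]\cdots T[i-1]\overline{T[i]}$ is a nonsubstring for every $i=1,\ldots,|T|$. Combined with $T$ itself being a nonsubstring, this implies that $S\cdot X$ is a nonsubstring for \emph{every} binary string $X$ of length $|T|$, so $S$ extends rightward in $\oracle$ by at most $|T|-1$ characters. The second loop then extends $S$ by $T[i]$ while the query returns YES (the only feasible character, since $S\cdot\overline{T[i]}$ was ruled out in the first loop), and stops when both $S\cdot T[i]$ and $S\cdot\overline{T[i]}$ are known to be nonsubstrings. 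A stronger conclusion follows: the only occurrence of $S$ in $\oracle$ is as the suffix, because any non-suffix occurrence would supply a rightward extension by one of the two binary symbols.

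The third loop is the subtle phase, and I would maintain the invariant that $S$ is the \emph{unique} occurrence of its length in $\oracle$, namely as the suffix. Let $c=\oracle[n-|S|]$ be the character preceding this unique occurrence. If $c=1$, then $\query(1\cdot S)=1$ and {\sc ExtendLeft} prepends $1$. If $c=0$, then $1\cdot S$ cannot appear anywhere (otherwise $S$ would have an occurrence preceded by $1$, violating uniqueness), so $\query(1\cdot S)=0$ and {\sc ExtendLeft} prepends $0$. In either case the new $S$ is the suffix of length $|S|+1$, and any alternative occurrence of this new $S$ would force an alternative occurrence of the old $S$, contradicting the inductive hypothesis. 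When the loop exits $|S|=n$, so $S=\oracle$.

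For the query count, each extension of length $\ell$ in the first loop costs exactly $\ell$ queries ($\ell-1$ NOs followed by a YES); the exit of the first loop consumes $|T|$ additional NO queries without extension; the second loop spends one query per successful extension plus one terminal NO; and the third loop spends one query per appended character. Writing $L_1,L_2$ for $|S|$ after the first and second loops, the total telescopes as $(L_1-|S|)+|T|+(L_2-L_1+1)+(n-L_2)=n-|S|+|T|+1$. The main obstacle is justifying the third loop: the naive worry is that prepending a $1$ on a YES answer could yield a string appearing only as a prefix of $\oracle$, stranding the algorithm at the next iteration. The uniqueness-of-occurrence invariant, inherited from the two right-extension loops, is precisely what rules this out.
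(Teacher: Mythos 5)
Your proposal is correct and follows essentially the same route as the paper: the same three-phase query accounting, your telescoping sum $(L_1-|S|)+|T|+(L_2-L_1+1)+(n-L_2)$ being exactly the generalization of the paper's $(|S'|-d)+(d+1)+(j+1)+(n-|S'|-j)=n+2$ computation for {\sc SkSu}. The only difference is that you spell out the correctness detail the paper leaves implicit---that after the two right-extension loops the current string occurs in $\oracle$ only as its suffix (certified by the terminal \no\ of the second loop together with the $|T|$ trailing \no's of the first), which is precisely what makes the {\sc ExtendLeft} phase sound without the paper's ``appears only once'' simplifying assumption---and your argument for that invariant is sound.
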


\section{Average-Case Algorithm}

Our algorithm is simple: We first ask $\oracle$ if $0^{\log n}$ is a
substring. If yes, we ask, for each $i=1,2,\ldots$, if $0^{\log n+i}$
is an $\oracle$-sub until the answer becomes \no.  Otherwise, we ask,
for each $j=1,2,\ldots$, if $0^{\log n-j}$ is an $\oracle$-sub until
the answer becomes \yes.  Thus we can find the integer $d$ (maybe
negative) such that $0^{\log n+d}$ is an $\oracle$-sub but $0^{\log n
  +d+1}$ is not. (For instance, $d=1$ means that $\query(0^{\log n})$
returns \yes, $\query(0^{\log n +1})$ \yes, and $\query(0^{\log n
  +2})$ \no.  Thus we need 3 queries for $d=1$.  Similarly we need
3 queries for $d=-2$.)  Then we call {\sc Basic($0^{\log n +d},0^{\log n
    +d+1}$)}, which gives us the final answer $\oracle$.
Note that  {\sc Basic($0^{\log n +d},0^{\log n+d+1}$)} spends $n+1$ queries regardless
of the value of $d$.
So our goal is to obtain the total number, $N$, of
queries to obtain $d$ in the above procedure for all $2^n$ strings.  
Let $X_i$ be the set of length-$n$ binary strings that 
have substring $0^i$ but not $0^{i+1}$,
and let $f(i)=|X_i|$.  Then $N$ can be written as 
\begin{eqnarray*}
N &=& f(\log n)\cdot 2 +f(\log n+1)\cdot (2+1) +f(\log n+2)\cdot (2+2)+\cdots + f(n) \cdot (2+n-\log n)\\
  && +f(\log n-1)\cdot 2 +f(\log n-2)\cdot (2+1)+f(\log n-3)\cdot (2+2)+\cdots + f(0) \cdot (2+\log n-1)\\
  &=& 2\cdot \left[f(0) + \cdots + f(\log n-2)+f(\log n-1)+f(\log n)+f(\log n+1)+\cdots + f(n)\right]\\
  && +\left[f(\log n)\cdot 0 +f(\log n+1)\cdot 1 +f(\log n+2)\cdot 2 +\cdots + f(n)\cdot (n-\log n)\right]\\
  && +\left[f(\log n-1)\cdot 0 +f(\log n-2)\cdot 1+f(\log n-3)\cdot 2+\cdots + f(0) \cdot (\log n-1)\right].
\end{eqnarray*}

The first sum is obviously $2\cdot2^n$.  For the second sum, 
note that for $\ell\geq \log n +1$,
$f(\ell)+f(\ell+1)+\cdots +f(n)$ is the number of strings that
have substring $0^{\ell}$.  Since $0^{\ell}$ can start from $n-\ell+1$ different
positions in $\oracle$, by using a union bound, we have 
$$f(\ell)+f(\ell+1)+\cdots +f(n)\leq \alpha(\ell) :=(n-\ell+1) \cdot 2^{n-\ell}.$$
Obviously $f(\ell)\leq\alpha(\ell)$ (this approximation is not very bad) 
and it is easy to see that 
$\alpha(\ell+1) \leq \alpha(\ell)/2$.  Therefore, using 
$\sum_{i=0}^{\infty} i \cdot 2^{-i}=2$, the second sum is bounded by
$$\alpha(\log n) \cdot \left( \frac{0}{2^0} + \frac{1}{2^1} + \frac{2}{2^2} + \cdots \right) 
\leq (n-\log n +1)\cdot 2^{n-\log n}\cdot 2 \leq 2\cdot 2^n.$$

For the third sum, note that for $\ell \leq \log n -1$,
$f(\ell)+f(\ell-1)+\cdots +f(1)$ is the number of strings that
do {\em not} have substring $0^{\ell+1}$, which can be bounded as
follows due to \cite{Fu85, CFK95}.
$$f(\ell)+f(\ell-1)+\cdots+f(1) \leq\beta(\ell) :=\left(1 -\frac{1}{2^{\ell+1}} \right)^{n-\ell+1} \cdot 2^n. $$
For a large $n$ and any $\ell \leq \log n -1$, this $\beta(\ell)$ also
decreases exponentially, namely
\[
\frac{\beta(\ell-1)}{\beta(\ell)}
= 
\frac{ \left(1 -\frac{1}{2^{\ell}} \right)^{n-\ell+2} }
{\left(1 -\frac{1}{2^{\ell+1}} \right)^{n-\ell+1}}
= 
\left(1 - \frac{1}{2^{\ell+1}-1}\right)^{n-\ell+1}
\cdot \left(1 -\frac{1}{2^{\ell}} \right)
< e^{-\frac{n-\log n}{n-1}} < 2^{-1}.
\]
Now, exactly as before, the third sum is bounded by
\begin{eqnarray*}
\beta(\log n -1) \cdot \left( \frac{0}{2^0} + \frac{1}{2^1} + \frac{2}{2^2} + \cdots \right)
&\leq& (1-1/2^{\log n})^{n-\log n}\cdot 2^n \cdot 2 \\
&\leq& (1/e)(1-1/n)^{-\log n} \cdot 2^{n+1} \leq 2^n.
\end{eqnarray*}

Thus $N\leq 5\cdot2^n$. Since {\sc Basic} spends $(n+1)2^n$ queries as mentioned before, 
we have
\begin{theorem}
The average complexity for reconstrucion from substrings is at most $n + 6$.
\end{theorem}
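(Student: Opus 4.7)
The plan is to reduce the theorem to bounding the preliminary ``scan phase'' (the procedure that locates the integer $d$ with $0^{\log n + d}$ an $\oracle$-sub and $0^{\log n + d + 1}$ an $\oracle$-nonsub), since once $d$ is known Theorem~\ref{theorem-basic} gives an exact cost of $n+1$ queries for the call {\sc Basic}$(0^{\log n+d}, 0^{\log n+d+1})$, independent of $d$. Denoting by $N$ the total number of scan-phase queries summed over all $2^n$ length-$n$ strings, the average complexity is therefore $(n+1) + N/2^n$, and it suffices to prove $N \le 5 \cdot 2^n$.

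My first step is to exploit the three-sum decomposition of $N$ exhibited above, splitting the contribution according to whether $d$ is nonnegative or negative and isolating the constant-$2$ startup cost that every input pays. The first piece, $2 \sum_i f(i)$, collapses immediately to $2 \cdot 2^n$ because the sets $X_i$ partition $\{0,1\}^n$. The second piece I would bound by a union-bound count: the number of strings containing $0^\ell$ is at most $\alpha(\ell) := (n-\ell+1)2^{n-\ell}$, and since $\alpha$ at least halves at each step, the weighted tail is dominated by $\alpha(\log n) \cdot \sum_{i \ge 0} i \cdot 2^{-i} = 2\alpha(\log n) \le 2 \cdot 2^n$. The third piece uses the classical estimate $\beta(\ell) := (1 - 2^{-(\ell+1)})^{n-\ell+1} \cdot 2^n$ for the number of length-$n$ strings avoiding $0^{\ell+1}$, from \cite{Fu85, CFK95}.

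The main obstacle is verifying that $\beta$ itself decays geometrically as $\ell$ decreases, because $\ell$ appears in both the base and the exponent. I would factor $\beta(\ell-1)/\beta(\ell)$ as $\bigl(1 - 1/(2^{\ell+1}-1)\bigr)^{n-\ell+1} \cdot (1 - 2^{-\ell})$ and bound each factor separately; the key estimate is that the first factor, evaluated at the worst case $\ell \le \log n - 1$, stays below $e^{-(n-\log n)/(n-1)}$, which is strictly less than $1/2$ for $n$ sufficiently large. Once this subgeometric-halving property is in place, the same geometric-series trick yields a bound of $2\beta(\log n - 1) \le 2^n$ for this piece, and summing the three contributions gives $N \le (2 + 2 + 1) \cdot 2^n = 5 \cdot 2^n$. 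Dividing by $2^n$ and adding the $n+1$ deterministic cost of {\sc Basic} produces the claimed average bound of $n+6$.
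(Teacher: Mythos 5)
Your proposal is correct and follows essentially the same route as the paper: the same reduction to the scan-phase cost $N$ with {\sc Basic} contributing a fixed $n+1$ per string, the same three-sum decomposition via $f(i)=|X_i|$, the union-bound estimate $\alpha(\ell)$ with its halving property for the long-run case, and the $\beta(\ell)$ bound from \cite{Fu85, CFK95} together with the ratio estimate $\beta(\ell-1)/\beta(\ell)<e^{-(n-\log n)/(n-1)}<1/2$ for the short-run case, summing to $N\le 5\cdot 2^n$. No substantive differences to report.
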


\section{Randomized Algorithm}
\begin{figure}[!t]
\begin{center}
  \includegraphics[height=40mm]{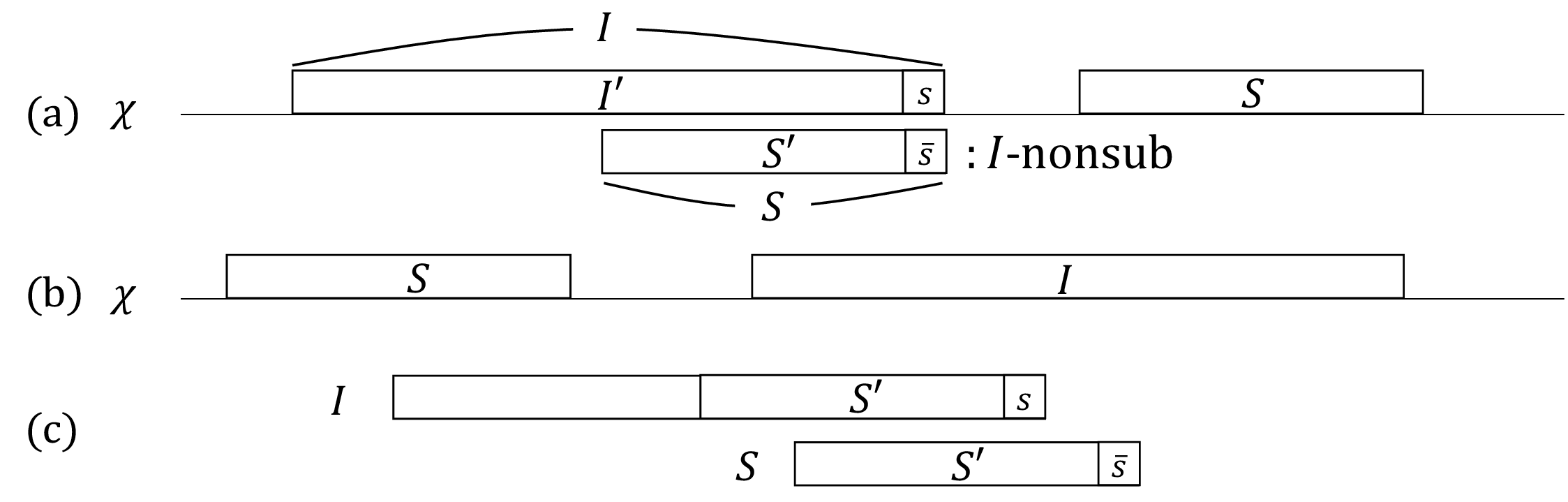}
  \caption{String as a second seed}
\label{secondseed} 
\end{center}
\vspace{-7pt}
\begin{algorithm}[H]
\caption{Procedure {\sc DoubleSeed}}
\label{alg:main}
\DontPrintSemicolon
\Indm
\KwOut{The oracle string $\oracle$}
\Indp
\If{ {\sc TryEasycase}$(Z,d,d_1)=1$}{
  \Return $Z$
}
$q \leftarrow 0.01n$; $\ell \leftarrow d+2d_1$; $r_0 \leftarrow d_1/(2q)$;\;
Assume $\labell(S)$ has value $\nul$ for all strings $S$ initially.\;
$I \leftarrow 0^d$  {\rm \small ($0^d$ is the longest substring of 0's in $\oracle$)}\;
\While{$|I|<q$}{
  \If{$\suffix_{d+1}(I)=0^{d+1}$}{
    \Return {\sc Exception}$(I, d, d_1)$
  } 
  $r \leftarrow$ a uniformly random value in $[0,1]$; {\rm\small
    (Do sample at each position with prob. $r_0$.)}\;
  \If{$r<r_0$}{
        $j \leftarrow 0$;\;
        \While{$\sibling(\suffix_{\ell+j}(I)) \in \sub(I)$}{
        $j \leftarrow j+1$; {\rm\small (extending $S$ in Fig.~\ref{secondseed} to the left until it becomes $I_0$-nonsub)}\;
    }
    \If({{\rm\small (if the position is
single-child)}}){$\query(\sibling(\suffix_{\ell+j}(I))=0$}{
      $\labell(\parent(\suffix_{\ell+j}(I))) \leftarrow$ the last symbol of $I$;\;
    }\Else{\Return {\sc 2ndSeed}$(I, \sibling(\suffix_{\ell+j}(I)))$
      }

  }
    \For{$j:=0$ \KwTo $|I|-\ell$}{
    \If({{\rm\small (if the next position is registered as
single-child)}}){$\labell(\suffix_{\ell+j}(I)) \neq$ \nul}{
        $I \leftarrow ${\sc TwoExtension}$(I, \labell(\suffix_{\ell+j}(I)))$;\;
        {\bf break}
        }
    \If{$j=|I|-\ell$}{
        $I \leftarrow$ {\sc ExtendRight}$(I)$;
        {\rm\small (the same as {\sc ExtendLeft} except the direction)}
        }
    }
}
\Return {\sc Basic}$(I, 0^{d+1})$
\end{algorithm}
\end{figure}

Here are our basic ideas for the algorithm and its analysis. Our main
routine is {\sc\bf DoubleSeed} (see Algorithm~\ref{alg:main}).  As
mentioned in the previous section, if we can find an $\oracle$-sub of
length $\log n$ using a constant number of queries, 
then we are done.  This is exactly what we do in
{\sc\bf TryEasycase} (see Algorithm~\ref{alg:easy})
at the beginning of {\sc DoubleSeed}.  Note
that we seek a substring that is shorter than $\log n$ by a 
constant $C_1$. (This constant is not harmful since our target complexity has
the $O(1)$ term.)  If there
is no such ``easy'' substring, we obtain a seed, $0^d$, such that it is an
$\oracle$-sub but $0^{d+1}$ is not, to start the extension with it.  
Let $d_1$ be the cost for this search.  Then it turns out that $2\log d$
is enough for this $d_1$ by using a sort of binary search (we
want a smaller cost for a smaller $d$).
In {\sc TryEasycase},  we also check if $1^{d+2d_1}$ is an $\oracle$-sub
and if so, we are again done by Theorem 1. 
The reason why we try to find such a 1's (not 0's) substring
is related to the repetition structure of $\oracle$ and will be
stated later.

If {\sc TryEasycase} fails, i.e., if it returns with 0, 
then after setting important values $q$, $\ell$ 
and $r_0$ at Line 3, we start
extending the seed, $0^d$, to the right, until its length becomes $q=0.01n$
(Line 6).
If we can successfully do it, then we have already achieved our goal,
namely we can achieve a query complexity of $n+O(1)$ by continuing
with the standard SkSu at Line 24.  This is our main claim in the following
sections. Let $I_0$ be this extended string of length $q$.  If we reach
the right end before the length becomes $q$, it means we could not
have done what should be done in the main loop because the seed is
located too close to the right end of $\oracle$ (by chance).  So we go to a sort of
exception handling routine, 
{\sc\bf Exception} (see Algorithm~\ref{alg:exception}), at Line 8, and complete our (failed) job in the main
loop with extending the current string to the left.  Note that it is easier to
do this than the original main loop, since we have no chance of encountering
the right end of the oracle.

\begin{algorithm}[!t]
\caption{Procedure {\sc TryEasycase}$(Z,d,d_1)$}
\label{alg:easy}
\DontPrintSemicolon
\Indm
\KwOut{
$1$ with the final answer $Z$ or $0$ with integers $d$ and $d_1$
such that $0^d$ is an $\chi$-sub and $0^{d+1}$ not, and $d_1$ is the
query cost spent in this routine
}
\Indp
\If({{\rm\small  ($C_1$ is a constant related to the error probability) }}){ $\query(0^{\log n - C_1}) = 1$}{
    \While{true}{
      Generate a random string $X$ of length $\log n + C_1$\;
      \If{$\query(X)=0$}{
          {\bf break}
      }
  }
  $Z = ${\sc Basic}$(0^{\log n - C_1}, X)$\;
  \Return $1$
}
$i \leftarrow 0$\;
\If{ $\query(0^{0.5\log n}) = 1$}{
    \While{$\query(0^{\log n - C_1 - 2^i}) = 0$}{
        $i \leftarrow i+1$\;
  }
  $d \leftarrow $ BS$(0^{\log n -C_1 - 2^i}, 0^{\log n -C_1 - 2^{i-1}})$;
  {\rm\small (BS($0^x,0^y$) returns the value $d$ by using the standard binary search 
  that finds the $0^d$ in the range of $0^x$ and $0^y$)}\;
  $d_1 \leftarrow \#$ of oracle queries {{\rm\small ($=(\#$ of queries in $BS)  + i + 2$)}}\;
}
\Else{
    \While{$\query(0^{2^i}) = 1$}{
        $i \leftarrow i+1$\;
  }
  $d \leftarrow $ BS$(0^{2^{i-1}}, 0^{2^{i}})$\;
  $d_1 \leftarrow \#$ of oracle queries\;
}
\If({{\rm\small ($C_2$ is a constant related to the performance)}}){$d_1 \leq C_2$}{
  $Z = ${\sc Basic}$(0^d, 0^{d+1})$\;
  \Return $1$
}
\If{$\query(1^{d+2d_1})=1$}{
  $Z = ${\sc Basic}$(1^{d+2d_1}, 0^{d+1})$\;
  \Return $1$
}
\Return $0$
\end{algorithm}

Each round of the main loop is ``sampled'' with probability $r_0$
(in Lines 9--10, and Lines 11--17 are skipped if not sampled).
Note that ``each step'' is the same as ``each position'' of $I_0$.
Namely when we say ``at each position'' it means we have already
extended the seed up to $I$ that is a prefix of $I_0$, and we are now
looking at the last symbol of $I$.  See Fig.~\ref{secondseed}.  At each
position, we obtain $S=S'\overline{s}$ (Lines 12--13 where no queries are needed) 
that is the shortest string such that (i) its sibling, $S's$, is a
suffix of $I$, (ii) its length is at least $\ell$, and (iii) it
is an $I$-nonsub.  The condition at Line 12 must be
met eventually since at least $\sibling(I) \neq I$. Now we are ready
to introduce the two cases.\\
\hspace*{10mm}(1) $S$ ($=S'\overline{s}$) is an $\oracle$-sub.
\hspace{10mm}(2) $S$ is an $\oracle$-nonsub.\\
If Case (1) happens, substring $S'$ can be
followed by both 0 and 1 in $\oracle$, but only 0 or only 1 follows
$S'$ if Case (2) happens.
Now we define a {\em single-child} position and a {\em
double-child} position for each position $h$ of $I_0$:  If
Case (1) happens, position $h$ is called double-child and otherwise
(if Case (2) happens) called single-child.

Suppose that the current position is double-child and is sampled.
Then we go to {\sc\bf 2ndSeed} (see Algorithm~\ref{alg:2nd}) at Line 17.  Because $S=S'\overline{s}$ is an
$I$-nonsub, $S$ should appear on the right side of $I$ or on its left
side, as shown in Fig.~\ref{secondseed} (a) and (b).  {\sc 2ndSeed} does not know
which side, but it extends this new seed
$S$ to the right in its while loop at Line 2.   If $S$ is on the left side (Fig.~\ref{secondseed} (b)), then its extension hits $I$ and returns at Line 7. Otherwise, the extension hits the the right end of
$\oracle$ and comes to Line 17.  If $I$ and $S$ do not overlap and there is a gap of at least $d$ between them, our task is easy (Lines 17--21):
We simply extend $I$ to the left, get to the left
end and fill the gap between $I$ and $S$ to
obtain the final answer $\oracle$.  Let us look at the
bookkeeping on the cost using the notion of ``profit'' and ``loss:''
Recall that we obtained the first seed of length $d$ using $d_1$
queries, where we count (i) $d$ as a profit and (ii) $d_1$ as a loss.
To know the right and the left ends of $\oracle$ as above,
we have a loss (iii) $d$ for each.  Furthermore, we spend $d_1/2$ expected
queries for the sampling (recall the value of $r_0$), which is not more
than (iv) $d_1$ queries whp (a loss). 
Do not forget that we have obtained the second seed
$S$ of length at least (v) $d+2d_1$ (a profit).  Thus our
profit is (i)+(v) and our loss is (ii)+(iii)$\times 2$+(iv), which 
balance and means we have spent at most $n$ queries in total.
We also need to consider the case that the gap between $I$ and $S$ is
small (the {\bf if} condition at Line 19 is not met and we come to Line 22) and the more messy case that $I$ and $S$ overlap (Fig.~\ref{secondseed} (c)). 
Fortunately it turns out that
we can enjoy a similar balance for all the cases, namely we can prove that
if we go to {\sc DoubleSeed}, we are done.

\begin{algorithm}[!t]
\caption{Procedure {\sc Exception}($I, d, d_1$)}
\label{alg:exception}
\DontPrintSemicolon
\Indm
\KwIn{$I=I'0^{d+1}$ is a string such that $I'0^{m}$ is s suffix of
$\oracle$ for some $m$. $d$ and $d_1$ are the same as those in the main routine}
\KwOut{The oracle string $\oracle$}
\Indp
$q \leftarrow 0.01n$; $\ell \leftarrow d+2d_1$; $r_0 \leftarrow d_1/(2q)$;\;
Assume $\labell(S)$ has value $\nul$ for all strings $S$ initially.\;
$I \leftarrow$ {\sc FindRightend}($I$)\;
$k \leftarrow |I|$\;
\While{$|I|<q+k$}{
    $r \leftarrow$ a uniformly random value in $[0,1]$; {\rm\small
    (Do sample at each position with prob. $r_0$.)}\;
  \If{$r<r_0$}{
        $j \leftarrow 0$;\;
        \While{$\siblingL(\prefix_{\ell+j}(I)) \in \sub(I)$}{
        $j \leftarrow j+1$;\;
    }
    \If{$\query(\siblingL(\prefix_{\ell+j}(I))=0$}{
      $\labell(\parentL(\prefix_{\ell+j}(I))) \leftarrow$ the first symbol of $I$;\;
    }\Else{\Return {\sc 2ndSeed}$(I, \siblingL(\prefix_{\ell+j}(I)))$
      }

  }
    \For{$j:=0$ \KwTo $|I|-\ell$}{
    \If{$\labell(\prefix_{\ell+j}(I)) \neq$ \nul}{
        $I \leftarrow ${\sc TwoExtensionL}$(I, \labell(\prefix_{\ell+j}(I)))$;\;
        {\bf break}
        }
    }
    $I \leftarrow$ {\sc ExtendLeft}$(I)$;
}
\Return {\sc Fill}$(\epsilon, I)$
\end{algorithm}

If the sampled position is single-child, it means we have found
the substring $S'$ such that if $S'$ appears 
anywhere in $\oracle$, its next symbol is always $s$. This
information is kept in the database (at Line 15 of {\sc DoubleSeed}).

Now whether or not the current round is sampled, if we do not go to
{\sc 2ndSeed}, we come to Line 18 of {\sc DoubleSeed}.
Here, if the next position (sometimes denoted as the current position $+1$) is single-child and that information is
already stored in the database, we can make a single
extension without a query. Look at Lines 18--22. Here we
first search a string $S'$ existing in the database 
and go to {\sc\bf TwoExtension} (see Algorithm~\ref{alg:2ext}), which is a bit 
complicated because of the right-end issue and gives us only $1/2$ extension
for free on average.  If we cannot find such $S'$ in the database, then we simply
extend the current string by one at Line 22 using a query.
Note that we need Case (1) (the position is sampled and double-child)
only once to go to {\sc 2ndSeed}.  So,
without loss of generality we can assume we encounter a good number of
single-child positions.  If some of them are sampled and have substrings $S'$ 
that repeat many times in
$\oracle$, we have a corresponding amount of profit, hopefully an
enough one to recover the several losses mentioned above.  
It turns out that it is important for this purpose
that we set value $\ell$ sufficiently smaller than $\log n$.  Details
will be given in Sec.~\ref{sec:analysis}.

Finally we give a short reason for the importance of the sampling.  Suppose
that we have 100 single-child and 100 double-child positions in $I_0$.  Then if
we sample too many, say 100 positions, we would not have any profit since we
need an extra query (Line 12) for each sampled position.  However
if we sample, say, 10 positions at random, this extra cost is
only 10 and about one-half of them should be single-child positions. 
If some of them have strings $S'$ that 
repeats many times in $I_0$, we have a good chance of getting
enough profit.  See the next section for details.

Other routines are {\sc ExtendRight} which is exactly the same as {\sc ExtendLeft} in Sec.~\ref{sec:basic} except the direction of
extension, {\sc FindRightend} that is already explained and given as
Lines 7--9 of {\sc Basic} in Sec.~\ref{sec:basic}, {\sc FindLeftend}
that is similar to
{\sc FindRightend} and {\sc Fill} that fills a gap between the given prefix
and suffix of $\oracle$, by extending the prefix to the right, until
the total length becomes $n$.  We omit 
pseudo codes for those easy routines.

\clearpage
\newpage

\begin{figure}[!t]
\begin{algorithm}[H]
\caption{{\sc 2ndSeed}($I,S$)}
\label{alg:2nd}
\DontPrintSemicolon
\Indm
\KwIn{Two $\oracle$-subs $S$ and $I$ s.t. 
$S$ is an $I$-nonsub and $\sibling(S)=\suffix_{|S|}(I)$. Note that $I$
can be written as $0^d1Z$.
}
\KwOut{The oracle string $\oracle$}
\Indp
$k \leftarrow |S|$\;
\While{{\bf true}}{
  $S \leftarrow$ {\sc RightExtend}$(S)$\;
  \If({{\rm\small ($S$ may hit $I$ or the right end)}}){$\suffix_d(S)=0^d$}{
    \If{$\query(S\!\cdot\!1\!\cdot\!Z[1]$) is \yes}{
      \If{$\query(S\!\cdot\!1\!\cdot\!Z$) is \yes}{
        \Return{\sc Basic}($S\!\cdot\!1\!\cdot\!Z$, $0^{d+1}$)\;
      }
      $S\leftarrow S\!\cdot\!1\!\cdot\!Z[1]$\;
    }
    \Else{
      \If{$\query(S\!\cdot\!1\!\cdot\!\overline{Z[1]}$) is \no}{
        \If{$\query(S\!\cdot\!1)$ is $\yes$}{
              $S \leftarrow S\!\cdot\!1$\;
        }\Else{
            $S \leftarrow$ {\sc FindRightend}$(S)$;\;
        }
        {\bf break}\;
      }
      $S \leftarrow S\!\cdot\!1\!\cdot\!\overline{Z[1]}$\;
        }
    }
}
\While{$|I|+|S| \leq n$}{
  $I \leftarrow ${\sc LeftExtend}($I$)\;
  \If{$\prefix_{d+1}(I)=0^{d+1}$}{
    $I \leftarrow ${\sc FindLeftend}($I$)\;
    \Return {\sc Fill}$(I,S)$\;
    }
}
\If{$\query(I)$ is $\no$}{
  $I \leftarrow ${\sc FindLeftend}($I$)\;
  \Return {\sc Fill}$(I,S)$\;
}
\For{$j \leftarrow k$ \KwTo $1$}{
  \If{$\suffix_j(I)=\prefix_j(S)$ and $\query(I\!\cdot\!\suffix_{|S|-j}(S))=1$}{
    \Return {\sc Fill}$(\epsilon, I\!\cdot\!\suffix_{|S|-j}(S))$
            {\rm\small ($\epsilon$ is the empty string)}
    }
}
\end{algorithm}
\begin{textblock*}{0.4\linewidth}(350pt, 200pt)
\begin{verbatim}
(Let, say, d=5 and let s=Z[1]:
S hits I (line 6) as
S=.........100000
I=          000001s......
Or S continues (line 8) as
S=.........1000001s...
Or S continues (line 16) as
S=.........1000001s'...
Or S hits the right end as 
S=.........1000001 (line 12)
Or S hits the right end
somewhere in its last 00000
(line 14).)


(Detecting overlap (line 25-27).
When |I|+|S|=n IS may be answer as
I=......110110
S=            110110.....
But, in reality they overlap as
I=......110110
S=         110........)
\end{verbatim}
\end{textblock*}
\begin{algorithm}[H]
\caption{{\sc TwoExtension}($I,t$).}
\label{alg:2ext}
\DontPrintSemicolon
\Indm
\KwIn{$t$ is a unique symbol following $I$ in $\oracle$ unless the right end of $I$ is also that of $\oracle$}
\KwOut{New $I$ with two symbols extended or with $0^{d+1}$, an indication of the right end}
\Indp
$s \leftarrow$ 0 or 1 uniformly at random.\;
\lIf{$\query(I\!\cdot\!t\!\cdot\!s$) is $\yes$}{
  \Return $I\!\cdot\!t\!\cdot\!s$
}
\lIf{$\query(I\!\cdot\!t\!\cdot\!\overline{s})$ is $\yes$}{
  \Return $I\!\cdot\!t\!\cdot\!\overline{s}$
}
\If({{\rm\small ($I\!\cdot\!t$ is the right end of $\oracle$)}}){$\query(I\!\cdot\!t)$ is $\yes$}{
    Add $0$'s after $I\!\cdot\!t$ such that its suffix of length $d+1$ is $0^{d+1}$.\;
  }
\Else({{\rm\small ($I$ is the right end of $\oracle$)}}){
    Add $0$'s after $I$ such that its suffix of length $d+1$ is $0^{d+1}$.\;
}
\Return $I$\;
\end{algorithm}
\end{figure}

\clearpage
\newpage


\section{Analysis of the Algorithm}\label{sec:analysis}

As mentioned in the previous section, the basic structure of
{\sc DoubleSeed} is as follows: (1) It first checks if our $\oracle$
is easy
and we are done if so.  Otherwise we obtain an $\oracle$-sub $0^d$ as a seed.  
(2) It extends the seed to the right, where each round is sampled with probability $r_0$.  
(3) If a sampled round is double-child (its position is a double-child position), we go to {\sc 2ndSeed} and we are done.  
(4) If a sample round is single-child, we store the single-child information in the database. 
(5) If the next position has been known to be single-child,
either it is sampled or not, we do not need a query for the
extension. Now we start our detailed analysis.  It is straightforward
to see that {\sc DoubleSeed} always returns with a right answer, so
our job in this section is to bound the number of queries for several
different cases.

\begin{lemma}\label{easycase}
If {\sc DoubleSeed} ends at Line 2, its query
complexity is $n+\max\{2C_1,C_2\}+1$ with failure (i.e., we cannot
achieve the complexity) probability at most $\frac{1}{2^{C_1}}$.
\end{lemma}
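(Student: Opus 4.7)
The plan is to enumerate the three places in {\sc TryEasycase} where the procedure returns $1$ (and so causes {\sc DoubleSeed} to halt at its Line 2), bound the query cost on each path, and check that the only source of randomness, and therefore of failure, is the random-string loop.

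First I would handle the randomized path (Line 7 of {\sc TryEasycase}). At this point the algorithm has spent one query on $\query(0^{\log n - C_1})$ and then enters the loop that samples a uniformly random string $X$ of length $\log n + C_1$ and queries it, repeating until a $\no$ is obtained. The key probabilistic observation is that $\oracle$ contains at most $n - (\log n + C_1) + 1 \le n$ distinct substrings of length $\log n + C_1$, so a uniformly random such $X$ satisfies $X \in \sub(\oracle)$ with probability at most $n/2^{\log n + C_1} = 1/2^{C_1}$. Hence with probability at least $1 - 1/2^{C_1}$ the loop terminates after a single query, and in that case Theorem~\ref{theorem-basic} gives the cost of {\sc Basic}$(0^{\log n - C_1}, X)$ as $n - (\log n - C_1) + (\log n + C_1) + 1 = n + 2C_1 + 1$. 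Any failure to achieve the stated complexity on this path corresponds exactly to the loop requiring more than one iteration, and so is bounded by $1/2^{C_1}$.

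Next I would dispatch the two deterministic paths. If the return occurs at Line 18 because $d_1 \le C_2$, then by definition $d_1$ counts all oracle queries spent so far locating $d$, and Theorem~\ref{theorem-basic} charges $n - d + (d+1) + 1 = n + 2$ queries for {\sc Basic}$(0^d, 0^{d+1})$; total $\le n + C_2 + 2$. If the return occurs at Line 22, the algorithm used $d_1$ queries to locate $d$, one extra query to confirm $1^{d+2d_1} \in \sub(\oracle)$, and then by Theorem~\ref{theorem-basic} spends $n - (d+2d_1) + (d+1) + 1 = n - 2d_1 + 2$ queries in {\sc Basic}$(1^{d+2d_1}, 0^{d+1})$. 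The telescoping $d_1 + 1 + (n - 2d_1 + 2) = n - d_1 + 3 \le n + 3$ shows this path is cheapest.

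Finally I would combine: taking the maximum of the three bounds $n + 2C_1 + O(1)$, $n + C_2 + O(1)$, $n + O(1)$ gives $n + \max\{2C_1, C_2\} + O(1)$, collapsing to the form $n + \max\{2C_1, C_2\} + 1$ claimed once the $O(1)$ constants are absorbed. Only Step~1 contributes randomness, so the overall failure probability is $\le 1/2^{C_1}$. The only nontrivial step is the probabilistic one, and even there the main difficulty is just verifying the union-bound count of $\oracle$-substrings of length $\log n + C_1$; the arithmetic in the other two cases is bookkeeping via Theorem~\ref{theorem-basic}.
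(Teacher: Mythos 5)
Your proposal follows essentially the same route as the paper: the same three-way case split on where {\sc TryEasycase} returns $1$, the same union bound showing a random string of length $\log n + C_1$ is an $\oracle$-sub with probability at most $n/2^{\log n + C_1} = 2^{-C_1}$, and the same cost accounting via Theorem~\ref{theorem-basic}, with randomness (hence failure) confined to the first case. Your additive constants are bookkept slightly differently (e.g.\ $n+C_2+2$ versus the paper's $n+C_2+1$, and the initial query plus one random-string query absorbed into the constant), but this matches the level of precision of the paper's own proof and does not affect correctness.
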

\begin{proof}
{\sc TryEasycase} returns with 1 in three cases:  (1) $0^{\log n -C_1}$ is an
$\oracle$-sub.  Then we try to find an $\oracle$-nonsub of length 
$\log n + C_1$. Since the length of $\oracle$ is $n$, the number of different
substrings of a fixed length is at most $n$.  Therefore the
probability that a random string of length $\log n + C_1$ is an
$\oracle$-sub is at most $n/2^{\log n +C_1}=1/2^{C_1}$.
Then using {\sc Basic}, our query complexity is $n+2C_1+1$ by Theorem 1. 
(2) The overhead $d_1$ is smaller than a constant $C_2$. 
Then this overhead is also an overhead of {\sc Basic} and 
our query complexity is $n+C_2+1$ by Theorem 1.
(3) There are an $\oracle$-sub $1^{d+2d_1}$ (string of 1's, not 0's) and an $\oracle$-nonsub $0^{d+1}$.  
Note that the cost of obtaining the former is 1 and $d_1$ for the latter.  
So, our query complexity is even less than $n$ by Theorem 1.
Note that Cases (2) and (3) do not use randomness.
\end{proof}

\begin{lemma}\label{dvalue}
Suppose that {\sc TryEasycase} returns 0.  Then, (i) $C_2\leq d_1\leq 2\log d + 1$, and
(ii) $d+2d_1 \leq \log n -C_1 + 1$.
\end{lemma}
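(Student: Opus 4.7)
The plan is to split on which branch of {\sc TryEasycase} determined $d$ and $d_1$, and within each branch to extract what the exit index $i$ of the outer ``while'' tells us about both $d$ and the query count. Let Case~A denote the branch where $\query(0^{0.5\log n})=1$ (so $d \geq 0.5\log n$), and Case~B the complementary branch (so $d < 0.5\log n$). In Case~A the outer ``while'' terminates at some $i$ with
\[
2^{i-1} < \log n - C_1 - d \leq 2^i,
\]
while in Case~B it terminates with $2^{i-1} \leq d < 2^i$. In either case {\sc BS} operates on an interval of $2^{i-1}$ integers, contributing $i-1$ queries; the two preliminary tests together with the outer ``while'' contribute $i + O(1)$ more. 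Hence $d_1 = 2i + O(1)$, with the precise constant inherited from the line-by-line count indicated in the pseudocode.

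For part (i), the lower bound $d_1 \geq C_2$ is immediate: otherwise the check \textbf{if} $d_1 \leq C_2$ would have caused {\sc TryEasycase} to return $1$ via {\sc Basic}$(0^d, 0^{d+1})$, contradicting the hypothesis. For the upper bound, Case~B is easy: $2^{i-1} \leq d$ gives $i \leq \log d + 1$, so $d_1 = 2i + O(1) \leq 2\log d + O(1)$. In Case~A, $d \geq 0.5\log n$ gives $\log d \geq \log\log n - 1$, while $2^{i-1} < \log n - C_1 - d \leq 0.5\log n - C_1$ gives $i \leq \log\log n + O(1)$; combining these yields again $d_1 \leq 2\log d + O(1)$. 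The specific ``$+1$'' in the lemma is obtained by a routine but careful accounting of each query the procedure makes.

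For part (ii), Case~B is once more the easy case: $d < 0.5\log n$ and $d_1 = O(\log\log n)$ combine to give $d + 2 d_1 \leq 0.5\log n + O(\log\log n)$, comfortably below $\log n - C_1 + 1$ for sufficiently large $n$. The delicate case is~A, where $d$ may be as close to the target bound as $\log n - C_1 - 1$. Here I would use the sharper loop-exit inequality $d \leq \log n - C_1 - 2^{i-1}$ together with $d_1 = 2i + O(1)$ to obtain
\[
d + 2 d_1 \leq \log n - C_1 - 2^{i-1} + 4 i + O(1),
\]
so it suffices to prove $2^{i-1} \geq 4 i + O(1) - 1$. This is the main obstacle, because the inequality fails for small $i$; however the lower bound $d_1 > C_2$ combined with $d_1 = 2i + O(1)$ forces $i \geq (C_2 - O(1))/2$, and for $C_2$ fixed as a sufficiently large absolute constant the exponential $2^{i-1}$ dominates the linear $4 i + O(1)$ term. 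Thus the whole of part (ii) reduces to a one-time calibration of the constants $C_1$ and $C_2$ chosen in {\sc TryEasycase}.
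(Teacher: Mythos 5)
The paper gives no proof here (``Straightforward and omitted''), and your argument supplies exactly the natural details the authors had in mind: split on the two branches of {\sc TryEasycase}, read off the loop-exit inequalities ($2^{i-1}<\log n-C_1-d\leq 2^{i}$ resp.\ $2^{i-1}\leq d<2^{i}$), note $d_1=2i+O(1)$, get the lower bound $d_1\geq C_2$ from the explicit check, and for (ii) in the delicate branch use $d\leq \log n-C_1-2^{i-1}-1$ together with the fact that $d_1>C_2\geq 40$ forces $i$ large enough that $2^{i-1}$ dominates $4i+O(1)$. This is correct; the only part you defer --- pinning the exact additive constants ``$+1$'' in (i) and (ii) by a line-by-line query count --- is precisely the bookkeeping the paper itself omits, and any small additive slack there is harmless for the way the lemma is used later (Lemmas~\ref{overlap} and~\ref{line17or24}).
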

\begin{proof}
Straightforward and omitted.
\end{proof}

We postpone the analysis of {\sc Exception} at the end of this
section.  Recall that its role is to do what we cannot do in
the rest of the main loop since we encounter the end of the string too
early.  So it is easier to see its analysis after finishing that of the whole
parts of the main loop.
Thus, let us move on to the next event of {\sc DoubleSeed}; it goes to 
{\sc 2ndSeed}.  Although it was mentioned above that we are
done, we are actually not since the second seed may overlap with the
current string $I$. Recall that $I_0$ is the extended string after we 
finished the while
loop.  Fix $I_0$ as an arbitrary string of length $q$ and let $h$ be 
a position in
$I_0$ from which {\sc 2ndSeed} is called with $I$ and $S$.  Let $J(h)$ be
the largest integer $j$ such that, at position $h$,  
$\suffix_j(I)=\prefix_j(S)$ and 
$I\cdot \suffix_{|S|-j}(S)$ is an $\oracle$-sub or 0 if such a positive
$j$ does not exists.  Namely $I$ and $S$
overlap with an {\em intersection} of length $J(h)$ and their {\em
  union} is an $\oracle$-sub.  If $J(h)$ is large, it
effectively shortens the second seed and we may not be
able to
achieve the profit we have expected.  Fortunately,
we have the next technical lemma claiming that such a concern is
needless and then 
Lemma~\ref{2seed}, which assures that if $J(h)$ is small, we have a
sufficient profit.

\begin{lemma}\label{overlap}
The probability that {\sc DoubleSeed} samples at a double-child 
position $h$ such that $J(h) > |S|-4d_1$
is at most $2^{-\frac{d_1}{8}}$ (we call such a position a {\em bad position}).
\end{lemma}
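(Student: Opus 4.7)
The only randomness relevant to this lemma is the Line~10 coin flip of Algorithm~\ref{alg:main}, which samples each iteration independently with probability $r_0=d_1/(2q)$. The string $I_0$ together with the tuples $(S(h),J(h))$ at every iteration are determined by $\oracle$ alone (the coin in {\sc TwoExtension} only perturbs the query count, not the shape of $I_0$). Call an iteration $h$ \emph{bad} if it is double-child and $J(h)>|S(h)|-4d_1$, and let $\mathcal B$ denote the set of bad iterations. Independence of the sampling coins together with a union bound yields
\[
\Pr[\text{some bad iteration is sampled}]\ \le\ r_0\cdot|\mathcal B|\ =\ \frac{d_1}{2q}\cdot|\mathcal B|,
\]
so the lemma reduces to the deterministic bound $|\mathcal B|\le(2q/d_1)\cdot 2^{-d_1/8}$.

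The first step of the plan extracts periodicity from badness. Fix $h\in\mathcal B$ and set $m=|S(h)|\ge\ell=d+2d_1$ and $p=m-J(h)<4d_1$. The construction of $S(h)$ gives $\sibling(S(h))=\suffix_m(I(h))$, and the definition of $J(h)$ gives $\prefix_{J(h)}(S(h))=\suffix_{J(h)}(I(h))$. Comparing these two equalities index by index forces $S(h)[k]=S(h)[k+p]$ for $k=1,\dots,m-p-1$; hence $S(h)[1..m-1]$, equivalently a length-$(m-1)\ge \ell-1$ factor of $I_0$ ending near position $h$, has period $p<4d_1$.

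The second and technically more delicate step is to count positions of $I_0$ whose appropriate length-$(\ell-1)$ factor admits some period $p<4d_1$. For each primitive period $p$, let $N_p$ be the number of such positions; Fine--Wilf (applicable since $\ell-1>8d_1$ by Lemma~\ref{dvalue}) makes this primitive period unique, so $|\mathcal B|\le\sum_{p<4d_1}N_p$. A length-$(\ell-1)$ window of primitive period $p$ is encoded by a single length-$p$ bit-pattern; using the $2^p$ bound on such patterns, that $I_0$ is a substring of $\oracle$, and the structural constraints that $\oracle$ contains neither $0^{d+1}$ nor $1^{d+2d_1}$ (the latter from the failure of {\sc TryEasycase}), I would argue $N_p\le q\cdot 2^{p-\ell+O(1)}$. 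Summing the geometric series gives $|\mathcal B|\le q\cdot 2^{2d_1-d+O(1)}$. Finally, Lemma~\ref{dvalue} implies $d\ge 2^{(d_1-1)/2}$, so $d$ swamps $2d_1$ by an exponential margin once $d_1\ge C_2$, and the required $|\mathcal B|\le(2q/d_1)\cdot 2^{-d_1/8}$ follows.

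The main obstacle is the counting bound $N_p\le q\cdot 2^{p-\ell+O(1)}$: the trivial estimate $N_p\le q$ is useless, and squeezing out exponential decay in $\ell-p$ requires both Fine--Wilf uniqueness of the primitive period and a careful combinatorial argument using the structural constraints on $\oracle$ provided by the failure of {\sc TryEasycase}. The remaining pieces -- the union bound in the first display, the index-chasing in Stage~1, and the plug-in of Lemma~\ref{dvalue} at the end -- are routine once this counting estimate is in hand.
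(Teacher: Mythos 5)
Your reduction to a deterministic count (union bound over the independent Line~10 coins, at most $r_0|\mathcal B|$) and your Stage~1 observation that a bad position forces a small period $p=|S|-J(h)<4d_1$ in a window of length about $\ell-1$ are both sound and agree with the paper. The gap is exactly the step you flag as the main obstacle: the counting estimate $N_p\le q\cdot 2^{p-\ell+O(1)}$ is not just hard, it is false. The $2^p$ bound only limits the number of \emph{distinct} period-$p$ windows, not the number of \emph{positions} at which such a window occurs; a single long region of small period (say an alternating block $(01)^k$ or $(011)^k$ of length $\Theta(n)$) makes the same periodic window occur at linearly many consecutive positions, so $N_p$ can be $\Omega(q)$. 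The structural constraints you invoke (no $0^{d+1}$, no $1^{d+2d_1}$ in $\oracle$) exclude long runs of a single symbol but do nothing to exclude long regions of period $2,3,\dots<4d_1$, so they cannot rescue the bound. Consequently your chain $|\mathcal B|\le\sum_p N_p\le q\cdot 2^{2d_1-d+O(1)}$ collapses: the superset ``positions with a periodic window'' is far too crude a relaxation of ``bad position.''

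The paper's proof closes this gap by using badness of \emph{two} positions jointly rather than counting periodic windows. A bad position $h_1$ produces, inside $\oracle$, a long region of period at most $4d_1$ ending at a point $h_2\le h_1+4d_1$ at which the period is \emph{broken} (the last symbol of $S$ is the complement of the symbol dictated by the period). If another bad position $h_3$ occurred within distance about $\ell-12d_1$ beyond $h_2$, its own periodic region would overlap the first by at least $8d_1\ge 2\max$ of the two block lengths, forcing (by the Fine--Wilf-type argument you also cite) the two regions to share one periodic structure; that structure would then determine the symbol at $h_2$ and contradict the break there. Hence bad positions lie in clusters of width at most $4d_1$ separated by gaps of length about $\ell-8d_1=d-6d_1$, giving only the polynomial density bound $|\mathcal B|\le \frac{4d_1}{\ell-8d_1}\,q$ (not the doubly-exponentially small count you aim for). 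Multiplying by $r_0=d_1/(2q)$ gives failure probability at most $\frac{2d_1^2}{d-6d_1}$, and only then does Lemma~\ref{dvalue} ($d\ge 2^{(d_1-1)/2}$, $d_1\ge C_2\ge 40$) convert this into $2^{-d_1/8}$. So your endgame (plugging in $d\ge 2^{(d_1-1)/2}$) is in the right spirit, but the combinatorial core of the lemma---why bad positions cannot be dense---is missing from your argument and cannot be supplied by the counting route you propose.
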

\begin{proof}
We first review the basic property of strings.  Suppose that a string
$X$ overlaps with the same $X$ as illustrated in Fig.~\ref{repeat} (1), i.e.,
$X=X'Z=ZX''$ for some $Z$.
Then one can see easily that if the intersection is one-half or more, 
the string should have a repetition structure, namely, we
can write that $X=\sigma^k\sigma'$, where $\sigma'$ is a prefix of
$\sigma$. We call $\sigma$ a {\em block}.  Next suppose that a string $Y$
also has a repetition structure with a block $\gamma$ and that
$k=\max(|\sigma|,|\gamma|)$.  Furthermore, suppose that $X$ and $Y$
overlap with an intersection of length at least $k'=2k$, i.e.,
$\suffix_{k'}(X)=\prefix_{k'}(Y)$.  Then we can claim that both
\begin{figure}[t]
  \begin{center}
   \includegraphics[height=40mm]{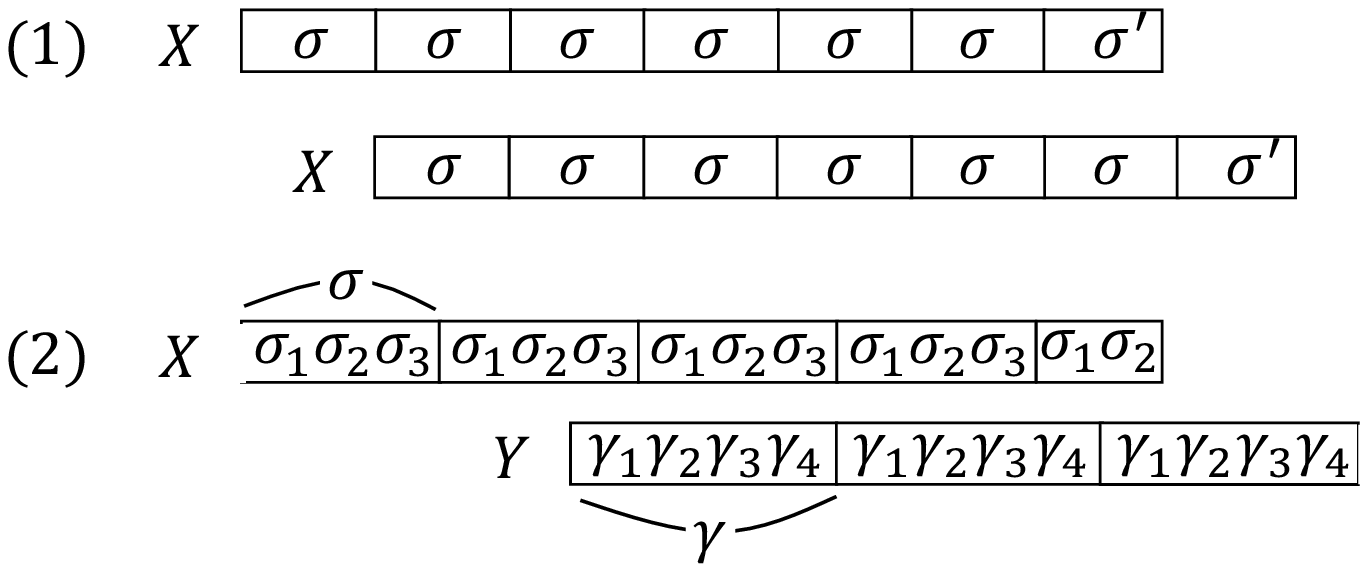}
  \caption{Periodic structure}
  \label{repeat}
 \end{center}
\end{figure}
\begin{figure}[!t]
  \begin{center}
   \includegraphics[height=40mm]{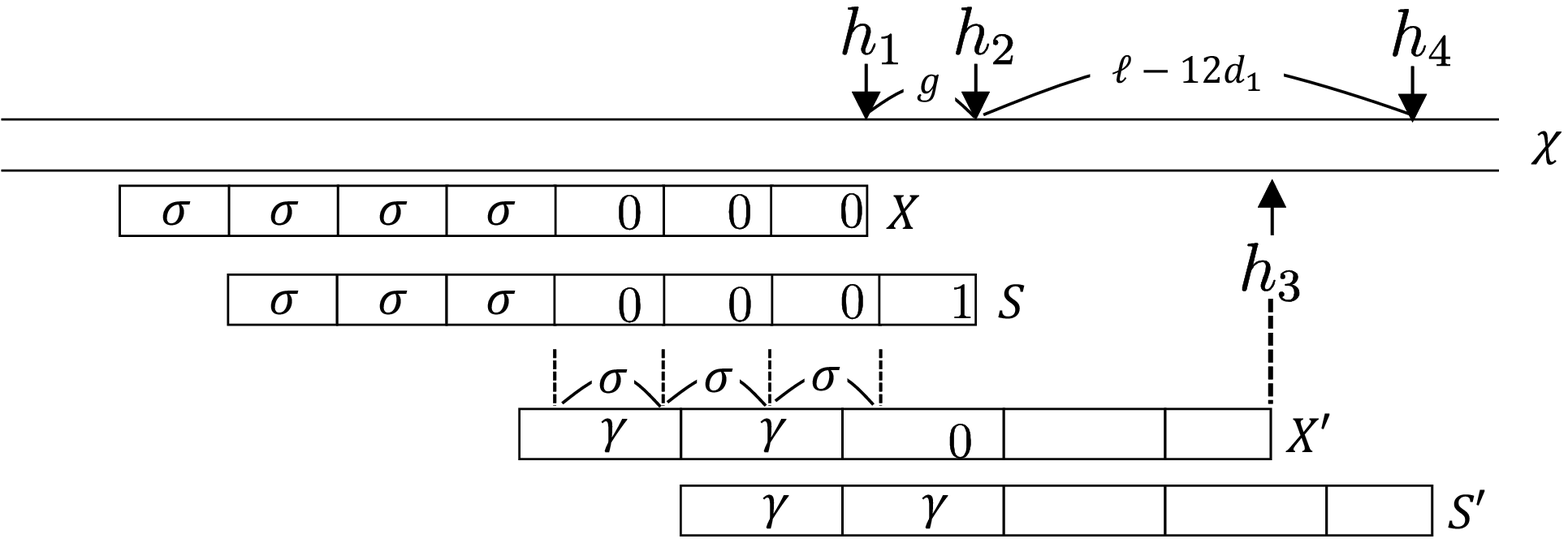}
  \caption{Intervals of bad positions}
  \label{badpositions}
 \end{center}
\end{figure}
strings have the same repetition structure.  (The formal proof may be
a bit messy, but see Fig.~\ref{repeat} (2) for its sketch.  Let $p$ be a common
divisor of $|\sigma|$ and $|\gamma|$.  Then as shown in the figure, we
can write $\sigma=\sigma_1\sigma_2\sigma_3$ and
$\gamma=\gamma_1\gamma_2\gamma_3\gamma_4$ if $|\sigma|=3p$ and
$|\gamma|=4p$, for instance.  Then one can see that
$\gamma_1=\sigma_3=\gamma_4$, $\gamma_2=\sigma_1=\gamma_1$ and
$\gamma_3=\sigma_2=\gamma_2$, implying $\sigma_i$'s and 
$\gamma_j$'s are all identical.  Thus $X$ and $Y$ have the same repetition
structure with the same block.)

Now see Fig.~\ref{badpositions}.  Suppose that current position $h_1$
is a bad position, namely the second seed $S$ and $X=\suffix_{|S|}(I)$
deeply overlap ($h_2$ is the position of the right end of the union).
Their intersection has length $J(h_1) > |S|-4d_1$ by the condition of
the lemma.  Since $S$ and $X$ are the same
strings except for their rightmost symbols, they must have a repetition
structure with a block, say $\sigma$, as mentioned above.  

Because of the large length of the intersection, the length, $g$, between
$h_1$ and $h_2$ is at most $4d_1$.  Consider the position $h_4$ that
is far away from $h_2$ by $\ell-12d_1$.  Then we can show that there is no bad
position between $h_2$ and $h_4$.  Suppose otherwise that $h_3$ (see
the figure) is a bad position.  Then we can define $X'$ and $S'$
exactly as we did for $X$ and $S$ and they heavily overlap with a block $\gamma$
whose length is at most $4d_1$. Furthermore, because both $X$ and $X'$ are
substrings of $\oracle$ and the distance between $h_1$ and $h_3$ is at
most $\ell -8d_1$, $X$ and $X'$ overlap with an intersection of 
length at least $8d_1$.
Thus the two strings satisfy the condition for the above claim,
meaning that their blocks should be
the same.  However,  as shown in the figure, this
implies a contradiction because the symbol in $\oracle$ 
at position $h_2$ cannot be determined uniquely.  
This means that the number of bad positions is less than
$\frac{4d_1}{\ell-8d_1}q$. Hence the probability that these bad
positions are not sampled at all is at least 
$(1-r_0)^{\frac{4d_1q}{\ell-8d_1}} > 1 - \frac{4r_0d_1q}{d-6d_1} = 1 -
\frac{2d^2_1}{d-6d_1}$. Namely the probability that a bad position is
sampled (=the probability of the lemma we want to prove) is at most 
$\frac{2d^2_1}{d-6d_1}$, which is at most
$\frac{2d^2_1}{2^{(d_1-1)/2}-6d_1}$ by Lemma 5.2 (i).  For $C_2\geq
40$ (implying $d_1\geq 40$ by Lemma 5.2 (i)), we have
$2^{(d_1-1)/2}-6d_1 \geq 2^{d_1/2}/2$ and $4d_1^2\leq 2^{(3/8)d_1}$,
which implies $\frac{2d^2_1}{2^{(d_1-1)/2}-6d_1}\leq 1/2^{d_1/8}$.
Thus the lemma is proved.
\end{proof}

\begin{lemma}\label{2seed}
If {\sc 2ndSeed} is called at a position $h$ (called a {\em good
 position}) such that  $J(h) \leq
|S|-4d_1$, then our profit inside {\sc 2ndSeed} is at least $-d+2d_1$.
\end{lemma}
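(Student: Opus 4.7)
The plan is to compute the profit of {\sc 2ndSeed} (characters discovered minus queries used inside the routine) by a case analysis on the relative position of the second seed $S$ and the known prefix $I$ inside $\oracle$. There are three configurations to handle: (a) $S$ lies strictly to the left of $I$; (b) $S$ lies strictly to the right of $I$ with no overlap (so $J(h)=0$); (c) $S$ partially overlaps with the right end of $I$, so $0<J(h)\le|S|-4d_1$.

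For case (a), the right-extension of $S$ grows $S$ by one character per query until $\suffix_d(S)=0^d$ matches the prefix $0^d$ of $I$; the two probe queries $\query(S\!\cdot\! 1\!\cdot\! Z[1])$ and $\query(S\!\cdot\! 1\!\cdot\! Z)$ both succeed and hand control to {\sc Basic}$(S\!\cdot\! 1\!\cdot\! Z,\,0^{d+1})$, whose cost is given exactly by Theorem~\ref{theorem-basic}. For case (b), the right-extension overshoots the right end of $\oracle$, the three probe queries all fail, and {\sc FindRightend} recovers the end with $d+O(1)$ overhead; the subsequent left-extension of $I$ analogously overshoots the left end with $d+O(1)$ overhead, and {\sc Fill} spends one query per character in the inner gap. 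In both cases a direct count gives total queries $n-|I|-|S|+2d+O(1)$, so the profit is $|S|-2d-O(1)\ge -d+2d_1-O(1)$ using $|S|\ge d+2d_1$.

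Case (c) is the main obstacle. The right-extension of $S$ still overshoots the right end with $d+O(1)$ overhead, but the left-extension of $I$ now exits via the length-overflow condition $|I|+|S|>n$ before hitting the left end of $\oracle$. The algorithm then enters the overlap-detection for-loop, which scans $j$ from $|S|$ downward and queries $\query(I\!\cdot\!\suffix_{|S|-j}(S))$ precisely when $\suffix_j(I)=\prefix_j(S)$, terminating at $j=J(h)$; {\sc Fill} then costs $J(h)-1$ more queries. Combining these gives profit $(|S|-J(h))-d-O_{ol}-O(1)$, where $O_{ol}$ is the number of ``false'' matches $j\in(J(h),|S|]$ with $\suffix_j(I)=\prefix_j(S)$. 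The hardest step is bounding $O_{ol}$: I plan to argue, by a Fine--Wilf style periodicity argument in the spirit of the proof of Lemma~\ref{overlap}, that any two false matches $j_1<j_2$ force $\prefix_{j_2}(S)$ to be periodic with period $j_2-j_1$, so multiple false matches collapse to a common short period on a prefix of $S$; combined with the good-position bound $|S|-J(h)\ge 4d_1$ this should restrict $O_{ol}$ to a small function of $d_1$, well within the $2d_1$ slack, yielding profit $\ge -d+2d_1-O(1)$ in case (c) as well. Putting the three cases together delivers the lemma.
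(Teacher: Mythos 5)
Your case decomposition (left of $I$ / right with no overlap / right with overlap) and the accounting for the first two cases match the paper, and for the overlap case you have correctly identified the crux: the downward for-loop at Lines 25--27 may fire queries at ``false'' match lengths $j>J(h)$, and these must be charged against the free gain $|S|-J(h)$. However, the bound you plan to prove for $O_{ol}$ is not true. Your Fine--Wilf argument does show that all match lengths $j$ in the upper range share a common period $p$ of $\prefix_j(S)$, but this does \emph{not} make the false matches few in absolute number: it only forces them to be spaced $p$ apart, so there can be about $(|S|-J(h))/p$ of them. With $p=2$ (e.g.\ $\suffix_{|S|}(I)$ alternating $0101\cdots$, which is perfectly consistent with the seed structure), $O_{ol}$ can be of order $(|S|-J(h))/2=\Omega(d)$, and $d$ is exponentially larger than $d_1$ (Lemma~\ref{dvalue} gives $d_1\le 2\log d+1$), so no bound by ``a small function of $d_1$'' can hold. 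The statement that saves the lemma is the weaker, relative bound $O_{ol}\le(|S|-J(h))/2$, which together with $|S|-J(h)\ge 4d_1$ gives net profit at least $(|S|-J(h))/2-d\ge 2d_1-d$; this is exactly the paper's argument.

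To get that factor-$1/2$ bound you must rule out period $p=1$, i.e.\ the possibility that $\suffix_{|S|}(I)$ is $0^{|S|}$ or $1^{|S|}$: the former is impossible because $0^{d+1}$ is a known nonsubstring and $|S|\ge d+2d_1>d$, and the latter is excluded precisely by the $\query(1^{d+2d_1})$ test in {\sc TryEasycase} --- this is the reason that test exists, and your proposal never invokes it. So as written there is a genuine gap (a false quantitative claim plus a missing ingredient), although the argument is repairable by replacing ``$O_{ol}$ is a small function of $d_1$'' with ``$O_{ol}\le(|S|-J(h))/2$ since the common period is at least $2$.'' A minor further point: your bookkeeping yields $-d+2d_1-O(1)$ rather than the stated $-d+2d_1$; the paper's accounting absorbs the $O(1)$ overheads (e.g.\ the constant number of probe queries in {\sc 2ndSeed}) into the neutral per-symbol extension, so you should either do the same or note explicitly that the additive constant is harmless for Theorem~\ref{main}.
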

\begin{proof}
We go to {\sc 2ndSeed} with two $\oracle$-subs, $I$ that is the
current extension of the first seed and $S$ a second seed. 
In the while-loop from Line~2, $S$ is extended
right and if the extension continues, we do not have gain or loss as
shown in the small example in the pseudo code
(for instance, we obtain two extensions at
Line 8 with two queries).  We already (in Sec.~4) saw the case that $S$ is located
on the right side of $I$ and they do not overlap with a relatively
large gap. Namely the profit in
{\sc 2ndSeed} is (v)-(iii)$\times 2 =-d+2d_1$.
The case that $S$
is located on the left side of $I$ is rather easy, too.  In 
{\sc 2ndSeed}, we lose $d$ for
detecting the overlap (Line 6) and will lose another $d$ in {\sc
Basic} to detect the right end.  Since our gain is $d+2d_1$, the
overall profit is $-d-d+d+2d_1=-d+2d_1$.  The small-gap case is
similar. 

Now the remaining case is that $S$ is located on the right side of $I$
and they overlap.  Suppose that the length of the intersection 
between $I$ and $S$ is $j_0$ ($\leq |S| - 4d_1$ by the assumption of
the lemma) when
{\sc 2ndSeed} is called, which means if we let $k=|S|$, we have a new profit of
$k-j_0$ at this moment (instead of $d+2d_1$ in the non-overlap case),
since we can think that the first seed is prolonged by $k-j_0$ for
free at Line 26.
However, we have to spend extra queries to determine
this overlap in the for loop of Lines 25--27.  
Here, observe that this check is done downward
from the maximum $k$, i.e., from the possibility that the
intersection is $S$ itself.  Furthermore, if the intersection is more than
one-half of $S$, then as explained in the proof of Lemma~\ref{overlap},
$S$ must have a repetition structure and the length of its
block is at least two (recall that the case that $\oracle$ has a long
1's substring is already excluded using {\sc TryEasycase}).  Thus the
query at Line 26 occurs at most every other round, i.e., the number of
queries is at most $(k-j_0)/2$, assuming that this
query is actually done only if the first condition of the if statement
is met.  If the
intersection is
less than one-half (i.e., $j_0<k/2$), 
the query can start from $k/2$.  Therefore the
number of queries to determine the overlap is at most $k/2-j_0 < (k-j_0)/2$.
Thus  we still keep a profit of $(k-j_0)/2$ in both cases.  
Since $j_0 \leq k-4d_1$ by the assumption of the lemma, 
$(k-j_0)/2$ is at least $2d_1$.  Considering the loss of detecting the
right end (no loss for the left end) our profit inside {\sc 2ndSeed}
is at least $-d+2d_1$.  
\end{proof}

Now we have two remaining cases; {\sc DoubleSeed}'s ending at line 17
and at line 24.  For our analysis, we introduce a threshold for the
number of double-child positions in $I_0$, namely whether it is at least
$q/2^{11}$ or not.  If we have that many double-child positions, 
we can expect at least one of them is sampled at a good position and
we are done by Lemma~\ref{2seed}.  All other possibilities, such as
going to {\sc 2ndSeed} from a bad position and all the double-child
positions have missed being sampled (and ending at line 24), are
regarded as a failure, whose probability turns out to be
sufficiently small.

Conversely suppose that the number of double-child positions is less
than $q/2^{11}$.  Then if we go to {\sc 2ndSeed} (and {\sc DoubleSeed}
ends at line 17), it is still fine; our failure probability due to the bad
positions, etc. is small enough similarly as before.  
However, the probability that we do
not go to {\sc 2ndSeed} (hence eventually 
ending at line 24) becomes a main issue. Namely
we need to assure that we can enjoy a sufficient profit by
free extensions at line 20.  The idea is that the number of different 
strings, $U$, in
the database that give us free extensions is much smaller than the
number of single-child positions in $I_0$.  This is because such a
string $U$ should have a (shorter) prefix that has already appeared in $I$ at a
double-child position (and was not sampled since, if sampled, we should
have gone to the {\sc 2ndSeed}).  Thus the number of different $U$'s
is closely related to the number of double-child positions that is
now assumed to be small.  Thus the same $U$ appears many times, giving
us many free extensions.  Details are given in the proof of the
following lemma.

\begin{lemma}\label{line17or24}
If {\sc DoubleSeed} ends at line 17 or 24,
its query complexity is at most $n$ with failure probability at most $3 \cdot
e^{-\frac{d_1}{2^{13}}}$.
\end{lemma}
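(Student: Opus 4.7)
The plan is to split the analysis according to the number $N_d$ of double-child positions in the extended string $I_0$ (of length $q$), writing $N_s = q - d - N_d$ for the single-child positions and $S_{\mathrm{samp}}$ for the number of samples drawn in the main loop. Throughout, I would condition on two universal high-probability events: (E1) $S_{\mathrm{samp}} \le (3/2)\,d_1$, which holds except with probability $e^{-\Omega(d_1)}$ by a Chernoff bound on $S_{\mathrm{samp}}$ (whose mean is at most $r_0 q = d_1/2$); and (E2) no bad position is ever sampled, which fails with probability at most $2^{-d_1/8}$ by Lemma~\ref{overlap}.

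For the ending at line~17, I would verify that the profit/loss accounting informally given in Section~4 and formalized by Lemma~\ref{2seed} yields a total cost at most $n$. Under (E2) the invoking position is good, so the net profit inside {\sc 2ndSeed} is at least $-d + 2 d_1$; adding the profit $d$ from the first seed and the losses $d_1$ for {\sc TryEasycase} and $S_{\mathrm{samp}} \le (3/2) d_1$ for sampling (and observing that every main-loop extension before reaching the sampled position costs at most one query, so that the TwoExtension savings only help), the total query count is at most $n$. In Case~1 ($N_d \ge q/2^{11}$), the probability that the algorithm fails to enter {\sc 2ndSeed} is the probability of missing every double-child sample, which is at most $(1-r_0)^{N_d} \le e^{-d_1/2^{12}}$, contributing another small failure term.

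The remaining case, $N_d < q/2^{11}$ with the algorithm ending at line~24, is the heart of the lemma. Let $h$ be the number of TwoExtension calls and $h_1$ the subset using only a single query (each with probability $1/2$ by the random $s$, independently). A direct count gives
\[
\text{total queries} \;=\; d_1 + S_{\mathrm{samp}} + (q - d - h_1) + (n - q + d + 2) \;=\; n + d_1 + 2 + S_{\mathrm{samp}} - h_1,
\]
so under (E1) it suffices to show $h_1 \ge (5/2) d_1 + 2$. The key combinatorial claim is that the number $M$ of distinct database labels ever registered is $O(N_d)$: any label added at a sampled single-child position with depth parameter $j^\ast \ge 1$ (labels with $j^\ast = 0$ have length $\ell - 1$ and are never matched by line~19, so they are harmless) is certified by a strictly earlier position in $I$ at which the length-$(\ell + j^\ast - 1)$ prefix is followed by the opposite symbol, and distinct labels force distinct such ``double-child-type'' witnesses. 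Combined with $N_d < q/2^{11}$ and $N_s \ge q(1 - 2^{-11}) - d$, partitioning the single-child positions of $I_0$ by their database type and a coupon-collector-style bookkeeping give $\expe[h] \ge N_s - M/r_0 = \Omega(q)$, whence $h \ge C d_1$ for any fixed constant $C$ except with probability $e^{-\Omega(d_1)}$. A final Chernoff bound on $h_1$ conditional on $h$ then delivers $h_1 \ge (5/2) d_1 + 2$ with the same order of failure probability.

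The main obstacle will be the combinatorial bound $M = O(N_d)$: making precise the informal claim from Section~4 that each $U$ in the database has a shorter prefix appearing at a double-child position that was not sampled, and turning it into an injective assignment from distinct labels to distinct witness positions. Once this is in hand, summing the failure probabilities of (E1), (E2), the no-sample event of Case~1, and the concentration bounds on $h$ and $h_1$ in the last subcase keeps the total below $3 e^{-d_1/2^{13}}$, matching the claimed bound.
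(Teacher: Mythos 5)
Your overall route is the same as the paper's (split on the number of double-child positions in $I_0$, profit/loss bookkeeping via Lemma~\ref{2seed} for the endings through {\sc 2ndSeed}, and for the line-24 ending a bound on the number of distinct database types converted into {\sc TwoExtension} savings), but there is a genuine gap at the crux of the line-24 case: the types with $j^\ast=0$, i.e.\ labeled strings of length $\ell-1$. For a \emph{lower} bound on the number of free extensions these are exactly the dangerous types, not ``harmless'' ones. Under your literal reading of line~19 (such labels are never matched), every single-child position whose type has length $\ell-1$ contributes nothing to $h$ no matter how often it is sampled, and nothing in your argument bounds how many such positions there are; they can be essentially all of $I_0$. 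For example, if $\oracle$ is periodic with a short period (a string with very few double-child positions, hence squarely inside your $N_d<q/2^{11}$ case), then at almost every position $\sibling(\suffix_{\ell}(I))$ has never occurred, so $j^\ast=0$ throughout and your claim $\expe[h]\ge N_s-M/r_0=\Omega(q)$ is simply false. Under the intended reading (such labels are usable --- this is what the paper's analysis assumes, and it is the reading under which the lemma can hold), your key claim $M=O(N_d)$ is false, because a length-$(\ell-1)$ label has no double-child witness: the witness argument needs the while-loop to have advanced at least once, i.e.\ $j^\ast\ge1$. The paper resolves this by splitting the set of types into those of length $\ell-1$ and those of length $\ge\ell$, bounding the former crudely by $2^{\ell-1}$ and then invoking Lemma~\ref{dvalue}(ii) ($\ell=d+2d_1\le\log n-C_1+1$, i.e.\ $\ell$ deliberately smaller than $\log n$) together with $C_1\ge20$ to get $2^{\ell-1}\le q/2^{10}$, hence $|\mathcal{L}|\le 2N_d+2^{\ell-1}<q/2^{9}$. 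Your proposal never makes this step, and without it the heart of the Case-2 analysis does not go through.

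Two secondary points. First, in Case~1 you bound the probability of never entering {\sc 2ndSeed} by $(1-r_0)^{N_d}$, but a double-child position consumed as the second symbol of a {\sc TwoExtension} never receives a sampling coin at all; the paper handles this with a dichotomy (if a constant fraction of double-child positions are skipped this way, the $1/2$-per-skip profit already dwarfs the logarithmic amount needed; otherwise most double-child positions remain eligible for sampling), and some such argument is needed before your exponential bound applies. Second, your ``coupon-collector bookkeeping plus Chernoff'' for $h$ and $h_1$ ignores the dependence between positions of the same type (whether a position yields a free extension depends on whether an earlier same-type position was sampled); this is precisely why the paper partitions each $\pos(U)$ into blocks of size $\gamma=2^7$ and applies Chernoff to independent blocks --- fixable, but it must be done. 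Finally, with your event (E1) set at threshold $\tfrac{3}{2}d_1$ the line-17 bookkeeping yields about $n+d_1/2$ rather than $n$; take the threshold $d_1$ (mean $d_1/2$, failure $e^{-d_1/6}$) as the paper does so that the $2d_1$ gain from the second seed exactly covers the two $d_1$-size losses.
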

\begin{proof}
We consider two cases depending on the number of double-child
positions in $I_0$.

(Case 1: The number of double-child positions in $I_0$ is at least $q/2^{11}$)\hspace{2mm}
First of all, observe Lines 2 and 3 of {\sc TwoExtension} which is
called at Line 20 of {\sc DoubleSeed}.  Suppose neither $I$
nor $It$ hits the right end. Then the condition in Line 2 is met with
probability $1/2$ and if it is not, the condition 
in Line 3 must be met.  So, $I$
is extended by 1.5 symbols on average.
This mechanism means that if a double-child position $P_D$ is 
preceded by a single-child
position $P_S$ and if $P_S$ is extended by
{\sc TwoExtension}, then $P_D$ is skipped regardless of the value of $r_0$
at that round.  However, if this actually happens, then 
{\sc TwoExtension} gives us a profit of $1/2$ per each on average.
Therefore, if a $\delta$ fraction of double-child positions are actually
skipped in this mechanism, we can obtain a profit of $\delta q/2^{12}$, which
is far more than needed (recall that $q$ includes $n$ as a linear form
and what we need
as a profit is a logarithm of $n$) and we are done.
Thus without loss of generality, we can assume most of the $q/2^{12}$ double-child 
positions are subject to being sampled.

Our ``success'' here is only to end at line 17 (if we finish at Line
24 it is counted as failure).
Consider the following three conditions:
(i) No bad position is sampled. (ii) At least one
double-child position is sampled. (iii) The number of samples is
at most $d_1$.  If (i) and (ii) are met, the sample in (ii) forces us
to go to {\sc 2ndSeed} and that sample is not bad.  So,
our profit in {\sc 2ndSeed} is at least $-d+2d_1$ by
Lemma~\ref{2seed}.
If (iii) is met, our profit before coming to {\sc 2ndSeed} is at
least $d-2d_1$, so we have a good balance. Namely if all (i) to (iii)
are met, {\sc DoubleSeed} succeeds.  Now let us calculate the
failure probability. (i) A bad position is sampled with probability at
most $2^{-\frac{d_1}{8}}$ by Lemma~\ref{overlap}.  (ii) The probability that 
there is no sample at any double-child position is at most  $(1-r_0)^{q/2^{12}}< e^{-\frac{r_0q}{2^{12}}} = e^{-\frac{d_1}{2^{13}}}.$  (iii)  {\sc DoubleSeed} spends $d_1/2$ expected queries for sampling. Thus the
probability that this value exceeds $d_1$ is at most
$e^{-\frac{d_1}{6}}$ by Chernoff bound.  The lemma just takes 
the largest one among the three failure probabilities, multiplied by three.

(Case 2: The number of double-child
positions in $I_0$ is less than $q/2^{11}$)\hspace{2mm}
We have further two cases; ending at Line 17 and ending at Line 24.  In the
former case, (i) and (iii) of Case~1 
must be met for the success.  In other words, the (conditional)
failure probability is $P_1=e^{-d_1/8} + e^{-d_1/6} < 2 \cdot e^{-d_1/8}$.  In the following, we prove that if we end at Line 24,
then our (conditional) failure probability is $P_2=3 \cdot
e^{-\frac{d_1}{144}}$, which suffices for the lemma since $P_2$ is
obviously larger than $P_1$.

Fix an arbitrary $I_0$.  If a position $h$ is a single-child
position, we can determine a unique string, $U_h$ of length at 
least $\ell-1$, that is given a
label.  See Fig.~\ref{position} (1) for instance, where $V'$ of length $\ell-1$
appears twice in $I_0$.  Since $V'$ is followed by only 0 in $I_0$ and 
if that is also the case in the entire $\oracle$,
it is given
label 0.  Here, $U_h=V'$. Let
${\cal L} =\{U_h \;|\; h \mbox{ is a single-child position}\}$.
We prove that for any $I_0$, $|{\cal L}|$ is 
much smaller than the number of single-child positions and this means
that many single-child positions have the same string for $U_h$.  We
can assume most of the single-child positions are samplable for
exactly the same reason as in the preceding lemma.  If
$U_h = U_{h'}$, $h'$ comes later than $h$ (see Fig.~\ref{position} (1)), 
and $U_h$ is already labeled (by sampling), then
{\sc TwoExtension} is called at position $h'-1$.  Hence if a same $U_h$ repeats, 
some of them is probably sampled and after that {\sc TwoExtension} is called for every $U_h$.
For the sake of later argument, we divide ${\cal L}$ into two parts, ${\cal L}_1$ collecting
strings in ${\cal L}$ of length $\ell-1$ and ${\cal L}_2$ collecting ones of length 
$\ell$ or more.

Let $V$ be a string
satisfying the following three conditions:
(i) $|V| \geq \ell$,
(ii) there exists a symbol $s \in \{0,1\}$ such that 
$V\!s$ is an $I_0$-sub and $V\overline{s}$ a $I_0$-nonsub, and
(iii) $V'0$ and $V'1$ are both $I_0$-sub, where $V'=\suffix_{|V|-1}(V)$.
Let ${\cal L'}$ be the set of such $V$'s.  We will claim that if
$V \in {\cal L}_2$, i.e., 
if {\sc DoubleSeed} puts the label to a string $V$ and at Line~15 (namely
assuming that the round is sampled) 
and $|V|\geq \ell$, then this $V$ satisfies
these three conditions, i.e., ${|\cal L}_2| \leq |{\cal L'}|$.  (i) and (ii) are obvious and to see (iii) is also
met, look at Fig.~\ref{position} (2).
Suppose that {\sc DoubleSeed} comes to Line 15 at position $h$ 
with $\suffix_{\ell'}(I)=V1$ for $\ell' \geq \ell$,
where $I$ is a prefix of $I_0$ up to position $h$.
For this to happen, the condition of the while-loop at Line~12 should have been 
met for $V'1$ that is the suffix of $V1$ with length one shorter 
(otherwise it would come to Line 15 with the shorter $V'1$).
Namely $\sibling(V'1)=V'0$ is an $I$-sub,
that is, $V'0$ should have appeared before, at position $h'$ in
the figure. Thus (iii) is indeed met.

Thus ${|\cal L}_2| \leq |{\cal L'}|$.  Since the number of
binary strings of length $\ell -1$ is $2^{\ell -1}$, 
we have
$$|{\cal L}|=|{\cal L}_1|+|{\cal L}_2|\leq 2^{\ell -1} + |{\cal
  L'}|.$$
We then bound $|{\cal L'}|$ from above.
By the property (iii),  for each $V \in {\cal L'}$,
there is a double-child position that is the first appearance of $V'0$
or $V'1$ for $V'=\suffix_{|V|-1}(V)$.
(In Fig.~\ref{position} (2), the position  $h'$ can be this double-child position.
If it is sampled then we go to {\sc 2ndSeed}, 
but if not, we can get to the single-child position $h$.)  
Note that two different $V$'s ($0V'$ and
$1V'$) have the same $V'$ and hence for two elements in ${\cal L'}$,
there must be at least one double-child position.  In other words, 
$|{\cal L'}|$ is at most twice the number of double-child
positions, i.e., at most $q/2^{10}$ by the assumption of the lemma.
Therefore, $|{\cal L}|$ is at most $2^{\ell - 1} + q/2^{10} <
q/2^{9}$, 
because by Lemma~\ref{dvalue} (ii), $2^{\ell - 1} \leq \frac{n}{2^{C_1}}$
and this is at most $q/2^{10}$ for $C_1 \geq 20$.

Recall that almost all positions are single-child (exactly speaking it
is at least $(1-2^{-11})q$ by reducing the number of double-child positions,
but there is no harm in using $q$).  For a string $U
\in {\cal L}$, we denote by $\pos(U)$ the sequence of positions $h$ in
$I_0$ such
that $U_h=U$ and let $\freq(U)=|\pos(U)|$. Since $|{\cal L}|
\leq q/2^{9}$ as shown above, $\freq(U)$ is as large as $2^{9}$
on average.  However, $\freq(U)$ can differ arbitrarily for each $U$, which is 
messy for the proof.  So we use the following modest approach: 
$\pos(U)$ is partitioned into blocks of size $\gamma$.  The last block
may be smaller than the others; we call a block of size $\gamma$ a
{\it complete block}.  We ignore a possible profit from smaller blocks and
count only the number of complete blocks for all 
sequences in ${\cal L}$, which is given
as 
$$\sum_{U \in {\cal L}} \left\lfloor \frac{\freq(U)}{\gamma} \right\rfloor \geq \sum_{U \in {\cal L}}\frac{\freq(U)}{\gamma} - |{\cal L}|
> q \left( \frac{1}{\gamma} - \frac{1}{2^{9}}
\right) = \frac{3}{2^8}q$$
by setting $\gamma = 2^7$.

Recall that each position will be sampled with probability $r_0$.  Our
analysis is based on the fact that if the first half of a
complete block is sampled then we can
enjoy {\sc TwoExtension} in the second half. (Note that if some block
is sampled, then the following blocks having the same $U$ will
automatically go to {\sc TwoExtension} in the algorithm.  But in this
(modest) analysis,
new samples are also needed in those following blocks.)  Consider a single
(complete) block.  The probability that its first half is sampled is 
$1 - (1-r_0)^{\frac{\gamma}{2}} > r_0\gamma/2 - O(r_0^2) > \frac{2^7}{3}r_0$ for sufficiently large $n$.
Thus the expected number of sampled blocks is 
$\left(\frac{2^7}{3}r_0\right) \left(\frac{3}{2^8}q\right)=(r_0q)/4=d_1/8$.  Since the sampling is
independent for each block, this value is at least, say $d_1/12$ 
with failure probability $e^{-(\frac{d_1}{8}\cdot(1-\frac{8}{12})^2\cdot \frac{1}{2})}=e^{-\frac{d_1}{144}}$, 
by Chernoff bound.  If a block is sampled, {\sc TwoExtension} is
called $\gamma/2$ times and our profit for each is one with
probability $1/2$.  Thus the expected profit per sampled block is
$\gamma/4=2^5$ and again by Chernoff bound it is at least a $3\cdot 2^3$
fraction with failure probability
$e^{-(\frac{d_1}{12}\cdot2^5\cdot(1-\frac{3\cdot 2^3}{2^5})^2\cdot \frac{1}{2})}=e^{-\frac{d_1}{12}}$.
Thus our total profit is at least $(d_1/12)(3\cdot 2^3)=2d_1$
with the total failure probability $e^{-\frac{d_1}{144}} + e^{-\frac{d_1}{12}}$.

Finally our bookkeeping: Our gain is $d$ for the first seed and $2d_1$ from {\sc TwoExtension} as above.
Our loss is $d_1$ for finding the first seed, $d$ for detecting the right end and
at most $d_1$ for the sampling with failure probability $e^{-\frac{d_1}{6}}$ 
as in the proof of Case~1. 
Thus our gain and loss are balanced with failure probability
$e^{-\frac{d_1}{144}} + e^{-\frac{d_1}{12}}+ e^{-\frac{d_1}{6}} < 3
\cdot e^{-\frac{d_1}{144}}$, which is smaller than the failure
probability stated in the lemma.
\end{proof}

\begin{figure}[tb]
  \begin{center}
  \includegraphics[width=\hsize]{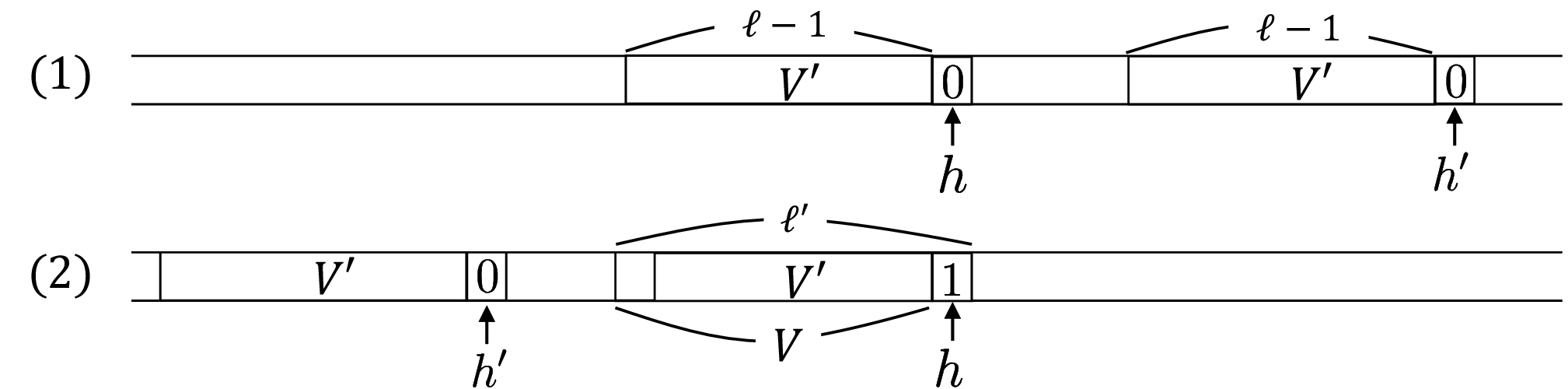}
  \caption{Single-child position and double-child position}
  \label{position}
\end{center}
\end{figure}

Now we are ready to come bask to {\sc Exception}.

\begin{lemma}\label{exception}
If {\sc DoubleSeed} ends at Line 8, its query complexity is at most $n$ 
with failure probability $3 \cdot e^{-d_1/2^{13}}$.
\end{lemma}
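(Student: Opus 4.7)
The plan is to exploit the fact that {\sc Exception} is a direction-reversed mirror of the main while-loop of {\sc DoubleSeed}: it extends a verified suffix leftward using {\sc ExtendLeft}, {\sc TwoExtensionL}, $\parentL$, and $\siblingL$ in place of their right-facing counterparts. First I would verify this structural correspondence line by line between Exception (Lines 5--20) and the main loop of DoubleSeed (Lines 6--23), confirming that the single-child/double-child classification, the sampling with probability $r_0$, and the label-based free-extension mechanism are all symmetric under reversal.

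Next I would transfer the probabilistic and combinatorial machinery. Lemma~\ref{overlap} and Lemma~\ref{2seed} depend only on the local repetition structure of $\oracle$ and are oblivious to extension direction, so bad positions defined via $\prefix$ overlaps satisfy the same count bound and a good sampled call to {\sc 2ndSeed} still yields profit at least $-d+2d_1$. I would then split into the same two cases as in Lemma~\ref{line17or24}. In the high-double-child case ($\geq q/2^{11}$), conditions (i)--(iii) of that proof give failure probability at most $3 e^{-d_1/2^{13}}$ and {\sc 2ndSeed} supplies the balancing $2d_1$ profit. In the low case I would redefine $\mathcal{L}$ and $\mathcal{L}'$ via prefixes, obtaining $|\mathcal{L}| \leq q/2^{9}$ by the mirrored three-condition argument, and hence, by the same Chernoff computation, expected free-extension profit $2d_1$ with failure probability at most $3 e^{-d_1/144}$.

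Finally I would carry out the overall bookkeeping. Every query issued in the aborted main loop before Line~8 contributed exactly one confirmed symbol to the current $I$; {\sc FindRightend} inside Exception costs at most $d$ further queries to resolve the $0^{d+1}$-suffix into the true right end; the sampling overhead inside Exception is at most $d_1$ with high probability; and the subsequent leftward extension adds $q$ symbols whose cost is reduced by $2d_1$ via {\sc TwoExtensionL} (or replaced by the profit from {\sc 2ndSeed}). Adding the $d_1$ queries spent initially to find the first seed, gains and losses balance to at most $n$ queries, matching the bookkeeping of Lemma~\ref{line17or24}. The main obstacle I expect is exactly this last alignment: since Exception is entered mid-extension rather than from scratch, I must argue carefully that the $d$-query charge inside {\sc FindRightend} is offset by the $d$-symbol profit from the first seed, and that no query is double-counted across the two phases. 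Once that accounting is made rigorous, the failure probability bound $3 e^{-d_1/2^{13}}$ follows directly from the probabilistic estimates inherited from Lemmas~\ref{overlap}--\ref{line17or24}.
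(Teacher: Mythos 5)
Your structural plan (mirror the main loop, transfer Lemmas~\ref{overlap}--\ref{line17or24} to the prefix/leftward setting, redo the two-case analysis) is the same as the paper's, but your bookkeeping has a genuine gap. First, your claim that ``every query issued in the aborted main loop before Line~8 contributed exactly one confirmed symbol to the current $I$'' is false: at every sampled position the main loop spends an extra query at Line~12 of {\sc DoubleSeed} ($\query(\sibling(\cdot))$) that does \emph{not} extend $I$; it only registers a label or triggers {\sc 2ndSeed}. Since the aborted main loop and the $q$ new rounds of {\sc Exception} are both sampled at rate $r_0=d_1/(2q)$, the total sampling loss is up to $2d_1$ whp, not the $d_1$ you budget for {\sc Exception} alone. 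Second, you assert that the leftward free extensions recover only the same $2d_1$ profit as in Case~2 of Lemma~\ref{line17or24}, missing the key new ingredient of this lemma: because the while loop of {\sc Exception} never meets the left end of $\oracle$, {\sc TwoExtensionL}$(I,t)$ can simply return $tI$ with \emph{no} query, so each free extension is worth a full query (not the expected $1/2$ of {\sc TwoExtension}), and the total profit becomes $4d_1$. With the corrected loss of $2d_1$ for sampling, your ledger reads gain $d+2d_1$ against loss $d_1+d+2d_1$, which is short by $d_1$ and does not yield the claimed bound of $n$; the paper closes exactly this gap with the query-free {\sc TwoExtensionL} observation (its items (ii) and (iii)), together with noting that the right-end detection cost is neutral and that {\sc Fill} costs nothing extra.

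A further, smaller point: you say Lemmas~\ref{overlap} and~\ref{2seed} transfer because they are ``oblivious to extension direction,'' which is essentially right, but you should also note (as the paper does) that {\sc 2ndSeed}$(I,S)$ can be reused verbatim because in {\sc Exception} the second seed always lies on the left side of $I$, a case {\sc 2ndSeed} already handles; otherwise one would have to write and analyze a mirrored version of that routine as well.
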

\begin{proof}
When it is called, we know $I$
has already gone over the right end of $\oracle$.  So what we do first is to
determine the real right end of $I$ (Line 3).  Then we extend $I$
exactly as the main loop of {\sc DoubleSeed} does.  The amount of
(new) extension is the same, i.e., $q$.  However, the direction of the
extension is to the left, so $\sibling$, $\suffix$ and {\sc
  TwoExtension} are replaced by $\siblingL$, $\prefix$ and {\sc
  TwoExtensionL}, respectively.  The modification is almost obvious;
for a string $S=S[1]S[2]\cdots S[m]$,
$\siblingL(S)=\overline{S[1]}S[2]\cdots S[m]$, and it turns out that 
{\sc TwoExtensionL($I,t$)} can
just return $tI$ without any query, since we do not have to worry
about the left end of $\oracle$ within the while loop.  Note that we
can use {\sc 2ndSeed($I,S$)} as it is, since it covers the case, which
is always the case now, that $S$ exists on the left side of $I$. 

Our bookkeeping analysis in Lemma~\ref{line17or24} is also similar
except the following differences; (i) Our loss to detect the right end
happens at Line 7 of {\sc DoubleSeed} and {\sc Fill} at Line 20 of
{\sc Exception} does not lose anything.  Thus this is neutral.  (ii)
The new {\sc TwoExtensionL} gives us a profit of one instead of $1/2$
before.  Thus our expected total profit is $4d_1$ with the same probability as Case~2 in the proof of Lemma~\ref{line17or24}.
(iii) The loss for the sampling can increase up to $2d_1$ instead of
$d_1$ because we have extra $q$ rounds, 
but this can be compensated for by (ii).
Thus the performance does not become worse than the case that the
algorithm does not go to {\sc Exception}. The failure values are
virtually the same as those in Lemmas~\ref{overlap}--\ref{line17or24}.
\end{proof}

Now we have our main theorem.

\begin{theorem}\label{main}
For any constant $0< \delta \leq 1$, our algorithm spends at most $n + 2^{13}\log_e(3/\delta) + 1$ queries with failure probability at most $\delta$. 
\end{theorem}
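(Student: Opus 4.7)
The plan is to assemble Theorem~\ref{main} from the four case lemmas by choosing the constants $C_1, C_2$ (the parameters inside {\sc TryEasycase}) as explicit functions of $\delta$. An execution of {\sc DoubleSeed} terminates at exactly one of four mutually exclusive locations: Line~2 (when {\sc TryEasycase} succeeds and returns $1$), Line~8 (control passes to {\sc Exception} on hitting the right end), Line~17 (a sampled double-child position triggers {\sc 2ndSeed}), or Line~24 (the main while-loop completes and control passes to {\sc Basic}). The conditional query cost and conditional failure probability at these exits are governed, respectively, by Lemma~\ref{easycase}, Lemma~\ref{exception}, Lemma~\ref{line17or24} (for Line~17), and Lemma~\ref{line17or24} again (for Line~24).

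First I would set
\[
C_2 \;:=\; 2^{13}\log_e(3/\delta), \qquad C_1 \;:=\; \max\{20,\lceil \log_2(1/\delta)\rceil\}.
\]
Whenever {\sc TryEasycase} returns $0$, its Line~19 check guarantees $d_1 \geq C_2$, so Lemmas~\ref{line17or24} and~\ref{exception} give conditional failure probability at most $3\cdot e^{-d_1/2^{13}} \leq 3\cdot e^{-C_2/2^{13}} = 3\cdot e^{-\log_e(3/\delta)} = \delta$ at Lines~8, 17 and~24. For Line~2, Lemma~\ref{easycase} bounds the failure probability by $1/2^{C_1} \leq \delta$. The conditional query cost in the Line~2 case is $n + \max\{2C_1, C_2\} + 1$, and the choice $2C_1 \leq 2\lceil\log_2(1/\delta)\rceil + O(1) \leq C_2$ (easily checked since $C_2$ grows linearly in $\log_e(1/\delta)$ with coefficient $2^{13}$) collapses this maximum to $C_2$. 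In the remaining three cases the conditional cost is at most $n$, hence certainly at most $n + C_2 + 1$.

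Since the four exit points are mutually exclusive, the overall query bound is the maximum of the four conditional query bounds and the overall failure probability is the maximum of the four conditional failure probabilities. Substituting the chosen $C_1, C_2$ yields query cost at most $n + 2^{13}\log_e(3/\delta) + 1$ with failure probability at most $\delta$, which is exactly Theorem~\ref{main}.

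There is no genuine mathematical obstacle left beyond bookkeeping; the one thing to double-check is that the $\delta$-dependent choices of $C_1, C_2$ clear every absolute threshold demanded within the cited lemmas, namely $C_1 \geq 20$ (used at the end of Lemma~\ref{line17or24} to force $2^{\ell-1}\leq q/2^{10}$) and $C_2 \geq 40$ (used at the end of Lemma~\ref{overlap} to force $2^{(d_1-1)/2}-6d_1 \geq 2^{d_1/2}/2$ and $4d_1^2 \leq 2^{3d_1/8}$). Both thresholds are absolute constants that our $C_1, C_2$ already dominate for every $\delta \in (0,1]$ because $2^{13}\log_e 3 \gg 40$, so the assembly step really is a union-of-cases plus a line of algebra.
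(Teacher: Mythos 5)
Your proposal is correct and follows essentially the same route as the paper: decompose by the four exit lines, invoke Lemmas~\ref{easycase}, \ref{line17or24} and~\ref{exception}, use $d_1 \geq C_2$ from Lemma~\ref{dvalue}, bound the overall failure probability by the maximum of the conditional ones, and verify the absolute thresholds $C_1 \geq 20$, $C_2 \geq 40$. The only (immaterial) difference is your choice $C_1 = \max\{20, \lceil\log_2(1/\delta)\rceil\}$ versus the paper's $2C_1 = C_2 = 2^{13}\log_e(3/\delta)$; both make $\max\{2C_1, C_2\} = C_2$ and yield the same bound.
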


\begin{proof}
The procedure {\sc DoubleSeed} ends at Lines~2,~8,~17 and~24.
When it ends at Line~2, 
it spends at most $n + \max\{2C_1, C_2\} + 1$ queries with failure probability $2^{-C_1}$ by Lemma~\ref{easycase}.
When {\sc DoubleSeed} ends at Lines~8,~17 and~24, 
Lemmas~\ref{line17or24} and~\ref{exception} say that
it spends at most $n$ queries 
with failure probability at most $3 \cdot e^{-d_1/2^{13}}$.

Combining these results, {\sc DoubleSeed} spends at most $n + \max\{2C_1, C_2\} + 1$ queries 
with failure probability at most $\max\{ 2^{-C_1}, 3 \cdot e^{-d_1/2^{13}} \}$.
Setting $2C_1 = C_2 = 2^{13}\log_e(3/\delta)$ and using $d_1 \geq C_2$ (see Lemma~\ref{dvalue}),
this failure probability is at most 
\vspace{-4pt}
$$
\max\{ 2^{-C_1}, 3 \cdot e^{-d_1/2^{13}} \} = 
\max\{ 2^{-C_1}, 3 \cdot e^{-C_2/2^{13}} \} = 
\max\left\{ \left( \frac{\delta}{3}\right)^{2^{12}\log_e2}, \delta \right\} = 
\delta.
\vspace{-4pt}
$$
Note that the proof of Lemmas~\ref{overlap} and~\ref{line17or24} require that $C_1 \geq 20$ and $C_2 \geq 40$ hold.
Our setting satisfies these conditions for any $0<\delta\leq1$.
\end{proof}

\section{Final Remarks}

An obvious future work is derandomization of the algorithm of Sec. 4,
although it does not seem easy.  For example, {\sc TryEasycase} is
using randomization to find a nonsubstring of length about $n$ in
constant steps.  There is no obvious way of doing this
deterministically.  Another issue is the time complexity.  In this
paper, we are interested in only the query complexity and in fact our
algorithm seems to spend more than linear number of computation steps,
but this issue should be easier and we already have some idea to make
it linear whp.  The failure probability and the constant term of the
randomized algorithm could be improved, too.


\end{document}